
\documentclass[preprint,12pt]{elsarticle}
\usepackage{amssymb}
\usepackage{amsthm}
\usepackage{graphicx}
\usepackage{graphics}
\usepackage{bm}   
\usepackage{color}
\usepackage{amsmath}
\usepackage{balance}
\usepackage{subfigure}
\usepackage{relsize}
\usepackage{float}
\newcommand{\mathsym}[1]{{}}
\newcommand{\unicode}[1]{{}}
\usepackage{setspace}
\DeclareMathOperator{\sech}{sech}
\newtheorem{thm}{Theorem}
\journal{Chaos, Solitons and Fractals}

\begin{document}

\begin{frontmatter}
\title{Higher order smooth positon and breather positon solutions of an extended nonlinear Schr\"{o}dinger equation with the cubic and quartic nonlinearity}

\author[inst1]{S. Monisha}

 \address[inst1]{Department of Nonlinear Dynamics, Bharathidasan University,\\Tiruchirappalli - 620024, Tamil Nadu, India.}

\author[inst2]{N. Vishnu Priya}
\author[inst1]{M. Senthilvelan}
\author[inst3]{S. Rajasekar}
\address[inst2]{Department of Mathematics, Indian Institute of Science,\\ Bangalore - 560012, Karnataka, India.}
  \address[inst3]{Department of Physics, Bharathidasan University,\\Tiruchirappalli - 620024, Tamil Nadu, India.}    

\begin{abstract}
We construct certain higher order smooth positon and breather positon solutions of an extended nonlinear Schr\"{o}dinger equation with the cubic and quartic nonlinearity. We utilize the generalized Darboux transformation method to construct the aforementioned solutions. The three well-known equations, namely nonlinear Schr\"{o}dinger equation, Hirota equation, and generalized nonlinear Schr\"{o}dinger equation, are sub-cases of the considered extended nonlinear Schr\"{o}dinger equation. The solutions which we construct are more general. We analyze how the positon and breather positon solutions of the constituent equations get modified by the higher order nonlinear and dispersion terms. Our results show that the width and direction of the smooth positon and breather-positon solutions are highly sensitive to higher-order effects. Further, we carryout an asymptotic analysis to predict the behaviour of positons. We observe that during collision positons exhibit a time-dependent phase shift. We also present the exact expression of time-dependent phase shift of positons.  Finally, we show that this time-dependent phase shift is directly proportional to the higher order nonlinear and dispersion parameters. 
\end{abstract}

\begin{keyword}
Higher order nonlinear Schr\"{o}dinger equation  \sep positon solutions \sep Generalized Darboux transformation method
\end{keyword}

\end{frontmatter}

\section{Introduction}

Exploring the localized solutions of certain nonlinear integrable partial differential equations is an active field of research in optics, plasma physics, astrophysics, oceanography and so on \cite{Y1,Y2,Y3,Y4}. The solutions of nonlinear evolution equations, namely solitons, breathers, kinks, vortex solitons, dissipative solitons, oscillons, and rational solutions, model many real-world phenomena such as tsunami waves, rogue waves, tidal waves, cyclonic waves, waves in nonlinear optical fibres, shallow water waves and so on \cite{soliton,dissipative}. In this direction, a new type of solution called positons are constructed for the Korteweg-de Vries (KdV) equation \cite{M1,common}. Positons exhibit similar structures as that of solitons in the long range. The positon solution for the KdV equation was obtained by imposing a positive eigenvalue (spectral parameter) in the generalized Darboux transformation (GDT) method \cite{M1}. For the negative eigenvalues, one obtains soliton and negaton solutions \cite{boris,neg}. It has also been shown that positon solution can model shallow water rogue waves (extreme waves in oceanic circumstances) \cite{Matveev4}.
 \par  The positon solution of the KdV equation presented by Matveev had a spectral singularity \cite{Matveev3}. Recently, for the KdV equation, Cen \textit{et al}., have circumvented this singularity by relaxing the spectral parameter ($\lambda$) be complex  \cite{kdv,SineG}.  Subsequently the resultant solutions are named as smooth positons or degenerated soliton solutions in the literature \cite{jing,bk,ckdv}.  
 \par  It is well known that during collision, solitons display constant phase shift. Differing from this, the smooth positons exhibit time-dependent phase shift when they collide with each other \cite{Hirota2}.  Due to this time-dependent phase shift, the positons exhibit different behaviour in short and long times. Positons travel simultaneously like a single component at small time scales and then they separate from each other and produce equal amplitude one soliton constituents with equal energy at large times \cite{kdv2}. As far as multi-solitons are concerned they split into different amplitude one soliton constituents with different speeds and energy. We mention here that it is not possible to obtain positon solutions by imposing equal speed parameters in multi-soliton solutions. For small time scale, positons may model the well-known tidal bore phenomenon \cite{tidal}.  Tidal bore phenomenon is nothing but in the river, many equal highest amplitude waves travel together for long distances say more than hundred kilometers. Since the positons do not exhibit energy exchange during collision one can prevent data loss. Hence the study of positons will also be useful for optics community \cite{Hirota2,tidal}.
 \par The singular and smooth positon solutions have been constructed for a class of integrable equations \cite{SineG1,p1,sasa,Liu1,Liu2,song,p2,mKdv,kundu,cmkdv,chin,w1}.  Breather-positons (b-p) are equal amplitude breathers (localized periodic waves on constant background) and they travel with equal speed \cite{A1,A2,cmkdv2,Kundu2,sasa2,mb}.  The central region of the b-p solution exhibits a rogue wave like structure. The nonlinear Schr\"{o}dinger equation and its higher order generalizations have been considered in several fields including fluid dynamics, birefringent optical fiber, shallow water surfaces, Heisenberg spin systems and ocean waves \cite{b1,b2,b3,b4,fib,fib2,ocean1,ocean2}. In this work, we construct certain smooth positons and b-p solutions of an extended nonlinear Schr\"{o}dinger equation (ENLSE) \cite{int,bim,hrw,yulanma}  with the cubic and quartic nonlinearity. 
\par More specifically, we consider the following ENLSE \cite{Anki},
\begin{eqnarray}
	iq_t+q_{xx}+2|q|^2q - i\alpha(q_{xxx}+6 q_{x}|q|^2)+\gamma(q_{xxxx}+8|q|^2q_{xx}\notag \\+2q^2q^*_{xx}+4q|q_x|^2+6q^{2}_xq^*+6|q|^4q)=0,
	\label{eq1}
\end{eqnarray}
with $q$ represents the wave envelope with $x$ being the propagation variable in the moving frame with time $t$. The coefficients $\alpha$ and $\gamma$ are associated with the third order ($q_{xxx}$) and fourth order dispersions ($q_{xxxx}$) respectively.  Equation \eqref{eq1} reduces to the  standard normalized NLS equation when both the coefficients $\alpha$ and $\gamma$ are zero  \cite{nls,P1,P2,P3}. If $\alpha \neq 0$, $\gamma = 0$ we obtain the Hirota equation \cite{hirota,Anki2} and for $\gamma\neq0$, $\alpha = 0$ we get another well known generalized NLS equation \cite{lpd}. Using the GDT, we construct second, third and fourth order smooth positons with zero seed solution and some higher order b-p solutions with plane wave seed solution. To the best of our knowledge, the solutions constructed in this paper are all new. Another aim of this work is to analyze how the higher order nonlinear terms affect the basic smooth positon and breather positon solutions. We investigate this aspect by varying the higher order nonlinear and dispersion parameters, namely $\alpha$ and $\gamma$ and studying the outcome. Our investigations reveal that their variations produce large compression in the width of the waves. We also observe tilt and deviations in their orientations. While increasing the value of higher order nonlinear and dispersion parameters, the distance in-between the two waves decreases in higher order smooth positons and b-p solutions.  Further, we carryout an asymptotic analysis on the second order smooth positon solutions in order to predict the behaviour of it at small and large time scales.  We observe that in contrast to soliton solutions the smooth positons exhibit time dependent phase shift.  Due to this time dependency, the second order smooth positon moves as a single component for small time and they split into two one solitons at long time which we also demonstrate graphically. The phase shift also depends on the higher order nonlinear and dispersion parameters.  By increasing the values of these two parameters, $\alpha$ and $\gamma$, the phase shift also increases which we also demonstrate pictorially. 

\par The presentation is organized as follows. In Sec.~2, we derive $N^{th}$ order DT solution formula of Eq.~\eqref{eq1}. We construct higher order smooth positon and b-p solutions through GDT method in Secs.~3 and 4 respectively and we also analyze the consequences of higher order nonlinear terms by varying the value of the parameters $\alpha$ and $\gamma$. In Sec.~5, we carry out the asymptotic analysis on the second order smooth positon solution and determine its time dependent displacement. We present the outcome of our investigations in Sec.~6.
\section {Generalized Darboux Transformation of \eqref{eq1} }
Darboux transformation (DT) is one of the well-known methods which is used to derive various kinds of solutions of  integrable nonlinear partial differential equations \cite{Matveevbook}.  It keeps the form of eigenvalue equations unchanged and establishes a relation between old and new potentials after appropriate transformation.  Using DT method one can derive multi-soliton solutions, rogue wave solutions of different order and higher order breathers of integrable nonlinear evolution equations \cite{bq1,bq2,bq3,bq4,bq5,bq6,bq8}. To derive positons and negatons of KdV equation, Matveev introduced the generalized DT (GDT) method  \cite{M1}.  In the GDT method, basic solutions of eigenvalue equations are expanded at a single eigenvalue using Taylor expansions.  We can obtain multi-solitons, higher order breathers, positons, negatons and rogue wave solutions using GDT method.  To derive GDT of \eqref{eq1} we consider 
the Lax pair of it as
\begin{subequations}
\begin{eqnarray}
	\begin{aligned}
	\Psi_x = L \Psi,\\
	\Psi_t = G \Psi, \label{lp}
\end{aligned}
\end{eqnarray}
where 
\begin{eqnarray}
		L = \lambda J + U,\quad G= 2 \lambda^2 J + 2 \lambda U + B - \alpha R + \gamma K,  
\end{eqnarray}
\begin{eqnarray*}
\begin{aligned}	
U= 
\begin{pmatrix}
0 & ~q \\ \\-q^* &0	
\end{pmatrix},\qquad 
J = \begin{pmatrix}
-i &~0 \\\\ 0 &i
\end{pmatrix},\qquad  
B= \begin{pmatrix}
i |q|^2 & i q_x \\\\ i q_x ^* &- i |q|^2 
\end{pmatrix},
\end{aligned}
\end{eqnarray*}

\begin{eqnarray}
	\begin{aligned}
	M=	\begin{pmatrix}
			q q_x^* - q^*q_x & -(2|q|^2 q +q_{xx})\\ \\2|q|^2 q^* +q^*_{xx}& -(q q_x^* - q^* q_x)
		\end{pmatrix}, \qquad K = \begin{pmatrix}
		i K_1 & K_2 \\ \\- K_2^* & ~- i K_1 
	\end{pmatrix},
	\end{aligned}
\end{eqnarray}
with
\begin{eqnarray}
	R&=& 4 \lambda^3 J + 4 \lambda^2 U+ 2 \lambda B +M,\notag \\
	K_1& =& 3 |q|^4 - |q_x|^2 + qq_{xx}^* + q^* q_{xx} - 2 i \lambda( q^* q_x - q q^*_x) - 4 \lambda^2 |q|^2 + 8 \lambda^4,\notag \\
	K_2&=& 6 i |q|^2 q_x + i q_{xxx} + 2 \lambda q_{xx}+ 4 \lambda |q|^2 q - 4i \lambda^2 q_x - 8 \lambda ^3 q.
\end{eqnarray}
\end{subequations}
Here, $\Psi = ( f,g)^T$  is the vector eigenfunction and $\lambda$ is the spectral parameter. The compatability condition of (\ref{lp}), $L_t - G_x + [L,G]=0$, gives Eq.~\eqref{eq1}. 
The DT formula of ENLSE has already been constructed for the breather and rogue wave solutions in \cite{hbs}. Hence, we express only the explicit solution formula of Eq.~\eqref{eq1} here. Solving the Lax pair equations \eqref{lp} with $N$-eigenvalues $\lambda_i, i=1,2,3,...,N$, we can generate the eigenfunctions $\Psi_i, i=1,2,...,N$. Then, $Nth$ iteration of DT gives the solution formula of Eq.~\eqref{eq1} in the form
\begin{subequations}
\begin{eqnarray}
	q_{N}=q_0-2i\dfrac{|D_{1N}|}{|D_{2N}|},
	\label{qn}
\end{eqnarray}
where $q_0$ is the seed solution with
\begin{equation}
	D_{1N}=
		\begin{pmatrix}f_1&g_{1}&\lambda_{1}f_1&\lambda_{1}g_1&\lambda_1^2f_1&\lambda_{1}^2g_1&\cdots&\lambda_{1}^{N-1}f_1&\lambda_1^Nf_1\\-g_1^*&f_{1}^*&-\lambda_{1}^{*}g_1^*&\lambda_{1}^{*}f_1^*&-\lambda_1^{*2}g_1^*&\lambda_{1}^{*2}f_1^*&\cdots&-\lambda_{1}^{*N-1}g_1^*&-\lambda_{1}^{*N}g_1^*\\f_2&g_{2}&\lambda_{2}f_2&\lambda_{2}g_2&\lambda_2^2f_2&\lambda_{2}^2g_2&\cdots&\lambda_{2}^{N-1}f_2&\lambda_2^Nf_2\\
			\vdots&\vdots&\vdots&\vdots&\vdots&\vdots&\cdots&\vdots&\vdots\\f_N&g_{N}&\lambda_{N}f_N&\lambda_{N}g_N&\lambda_N^2f_N&\lambda_{N}^2g_N&\cdots&\lambda_{N}^{N-1}f_N&\lambda_N^Nf_N
	\end{pmatrix},
	\label{D1}
\end{equation}
and
\begin{equation}	
D_{2N}=
\begin{pmatrix}f_1&g_{1}&\lambda_{1}f_1&\lambda_{1}g_1&\lambda_1^2f_1&\lambda_{1}^2g_1&\cdots&\lambda_{1}^{N-1}f_1&\lambda_1^{N-1}g_1\\-g_1^*&f_{1}^*&-\lambda_{1}^{*}g_1^*&\lambda_{1}^{*}f_1^*&-\lambda_1^{*2}g_1^*&\lambda_{1}^{*2}f_1^*&\cdots&-\lambda_{1}^{*N-1}g_1^*&\lambda_{1}^{*N-1}f_1^*\\f_2&g_{2}&\lambda_{2}f_2&\lambda_{2}g_2&\lambda_2^2f_2&\lambda_{2}^2g_2&\cdots&\lambda_{2}^{N-1}f_2&\lambda_2^{N-1}g_2\\
	\vdots&\vdots&\vdots&\vdots&\vdots&\vdots&\cdots&\vdots&\vdots\\f_N&g_{N}&\lambda_{N}f_N&\lambda_{N}g_N&\lambda_N^2f_N&\lambda_{N}^2g_N&\cdots&\lambda_{N}^{N-1}f_N&\lambda_N^{N-1}g_N
\end{pmatrix}.
	\label{D2}
\end{equation}
\end{subequations}
\par In Eqs.~\eqref{D1} and \eqref{D2}, $f_i$'s and $g_i$'s are eigenfunctions corresponding to the eigenvalues $\lambda_i$, $i=1,2,3...$. Using the solution formula \eqref{qn} we can derive various kinds of solutions including soliton, breather and rational solutions of Eq.~\eqref{eq1} with appropriate seed solutions.  \par We can construct degenerate solutions of the ENLSE by limiting the spectral parameters as $\lambda_i = \lambda_1 +\epsilon, i=2,3....N,$ where $\epsilon$ is a small parameter. We construct the GDT of ENLSE by expanding the eigenfunctions in Taylor series at $\epsilon$. The determinant representation of the solution formula \eqref{qn} can be generalized and rewritten in the following form  to derive the degenerate solutions of the ENLSE, that is
\begin{subequations}\label{qN}
\begin{eqnarray}
	q_{[N]}=q_0-2i\dfrac{D_1[N]}{D_2[N]},
	\end{eqnarray}
where
\begin{eqnarray}
	D_k[N]=\left|\lim_{\epsilon\rightarrow0}\dfrac{\partial^{N_i-1}}{\partial\epsilon^{N_i-1}}(D_{kN}^{'})\right|_{2N\times2N},\label{c}
\end{eqnarray}
\end{subequations}
with $D_{kN}^{'}=(D_{kN})_{ij}(\lambda_1+\epsilon)$, $k=1,2$ and  $N_i=\left[\dfrac{i+1}{2}\right]$, where $[i]$ is the floor function. The explicit positon solutions of ENLSE can be obtained through the modified DT formula \eqref{qN}.
\begin{figure}
	\includegraphics[width=\linewidth]{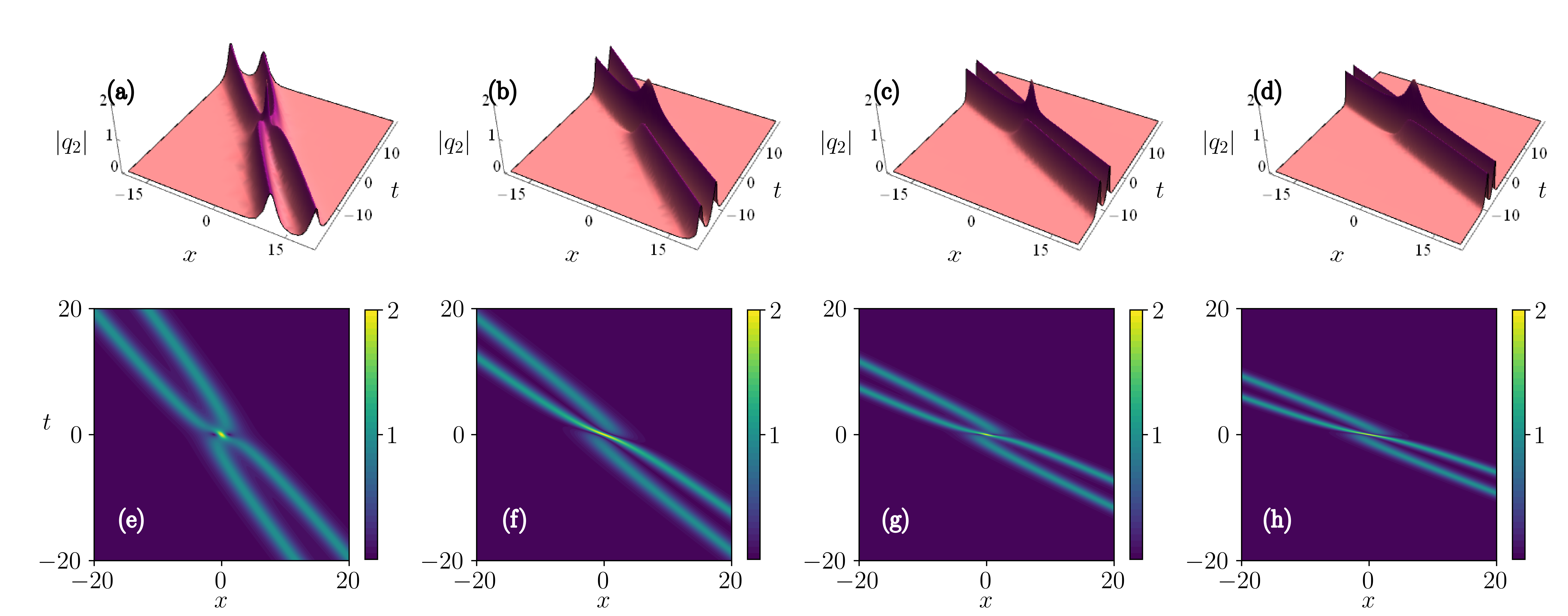}
	\caption{ Second order smooth-positon solution of ENLSE with the parameter values $\lambda_1=0.2+0.5i$, (a) $\alpha= \gamma=0$, (b) $\alpha=1;\;\gamma=0$, (c) $\alpha=0;\;\gamma=1$, (d) $\alpha= \gamma = 1$.  Figs.~(e)-(h) are corresponding contour illustration of Figs.~(a)-(d) respectively.} \label{sp22}
\end{figure}
\section{Smooth positon solutions of ENLSE}  
\subsection{Second order smooth positon solution}
\begin{thm}
 The explicit form of second order smooth positon solution of Eq.~\eqref{eq1} is 
\begin{subequations}\label{sp2}
	\begin{equation}
		q_{2}=\dfrac{\Omega_{12}}{\Omega_{22}},
	\end{equation}
	where
	\begin{eqnarray}
		\Omega_{12}&=&	4 (\lambda_1 -\lambda_1 ^*) 
		(e^{2 i x \lambda_1 +4 i t (\lambda_1 ^2-2 \alpha  \lambda_1 ^3-4 \gamma  \lambda_1 ^4+2 \lambda_1 ^{*2} (2 \alpha  \lambda_1 ^*+4 \gamma
			\lambda_1 ^{*2}-1))} (4 t \lambda_1  (3 \alpha  \lambda_1 \notag\\&&-1+8 \gamma  \lambda_1 ^2) (\lambda_1 -\lambda_1 ^*)+x (\lambda_1^*
		-\lambda_1 )-i)-e^{2 i \lambda_1 ^* (x+2 t \lambda_1 ^* (-1+2 \alpha  \lambda_1 ^*+4 \gamma  \lambda_1 ^{*2}))} \notag\\&&\times (i+x (\lambda_1 ^*-\lambda_1
		)+4 t (\lambda_1 -\lambda_1 ^*) \lambda_1 ^* (3 \alpha  \lambda_1 ^*+8 \gamma  \lambda_1 ^{*2}-1))),\notag\\
		\Omega_{22}&=&e^{4 i x \lambda_1 ^*}-
		2 e^{2 i x (\lambda_1 +\lambda_1 ^*)+4 i t (\lambda_1 ^2-2 \alpha  \lambda_1 ^3-4 \gamma  \lambda_1 ^4+\lambda_1 ^{*2}  (2 \alpha
			\lambda_1 ^*-1+4 \gamma  \lambda_1 ^{*2}))}  \times
		(2 x^2 (\lambda_1 -\lambda_1 ^*)^2\notag\\&&-1+32 t^2 \lambda_1  (\lambda_1 -\lambda_1 ^*)^2 \lambda_1 ^* (1+9 \alpha ^2 |\lambda_1|^2+64 \gamma ^2 |\lambda_1| ^4+3 \alpha  (\lambda_1 +\lambda_1 ^*) \notag\\&& \times(8 \gamma  |\lambda_1|^2-1)-8 \gamma
		(\lambda_1 ^2+\lambda_1 ^{*2}))-8 t x (\lambda_1 -\lambda_1 ^*)^2 (3 \alpha  \lambda_1 ^2+8 \gamma  \lambda_1 ^3-\lambda_1 \notag\\&&+\lambda_1
		^* (3 \alpha  \lambda_1 ^*+8 \gamma  \lambda_1 ^{*2}-1)))+e^{4 i x \lambda_1 +8 i t (\lambda_1 ^2-2 \alpha  \lambda_1 ^3-4 \gamma  \lambda_1 ^4+\lambda_1 ^{*2} (2 \alpha
			\lambda_1 ^*+4 \gamma  \lambda_1 ^{*2}-1))}.\notag\\
	\end{eqnarray}
\end{subequations}
\end{thm}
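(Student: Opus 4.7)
The plan is to specialize the generalized Darboux transformation formula (\ref{qN}) to $N=2$ with the vacuum seed $q_{0}=0$, expand the eigenfunction around the single spectral parameter $\lambda_{1}$, and evaluate the resulting two $4\times4$ determinants explicitly.

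First I would solve the Lax pair (\ref{lp}) on the zero background. With $q=0$ we have $U=B=M=0$, $K_{2}=0$, $K_{1}=8\lambda^{4}$, and $R=4\lambda^{3}J$, so both $L$ and $G$ collapse to scalar multiples of $J=\mathrm{diag}(-i,i)$. Writing $\omega(\lambda)=2\lambda^{2}-4\alpha\lambda^{3}-8\gamma\lambda^{4}$, the vector eigenfunction associated with a spectral parameter $\lambda$ may be taken as
\begin{equation*}
\Psi(\lambda)=\bigl(e^{-i\lambda x-i\omega(\lambda)t},\;e^{i\lambda x+i\omega(\lambda)t}\bigr)^{T}.
\end{equation*}
Both components of $\Psi_{1}=\Psi(\lambda_{1})$ are pure exponentials whose $\lambda$-derivatives generate the polynomial factors in $x$ and $t$ that appear in the statement.

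Next I would assemble the $4\times4$ matrices $D_{1,2}'$ and $D_{2,2}'$ from (\ref{D1})--(\ref{D2}) at $N=2$ with $\lambda_{2}=\lambda_{1}+\epsilon$ and $\Psi_{2}=\Psi(\lambda_{1}+\epsilon)$ in rows three and four. Since $N_{1}=N_{2}=1$ and $N_{3}=N_{4}=2$, the prescription (\ref{c}) leaves rows one and two unchanged while each of rows three and four receives one $\partial_{\epsilon}$ before the limit $\epsilon\to0$; after the limit, these rows contain $\partial_{\lambda}f(\lambda_{1})$ and $\partial_{\lambda}g(\lambda_{1})$ alongside $f_{1}$ and $g_{1}$, together with the corresponding conjugate data at $\lambda_{1}^{*}$. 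Factoring the exponentials $f_{1},g_{1},f_{1}^{*},g_{1}^{*}$ out of the appropriate rows leaves entries that are polynomials in $\lambda_{1},\lambda_{1}^{*},x,t$ together with $\omega'(\lambda_{1})=4\lambda_{1}-12\alpha\lambda_{1}^{2}-32\gamma\lambda_{1}^{3}$ and its conjugate. Cofactor expansion then reduces the numerator determinant to a two-term sum of exponentials with prefactors linear in $x$ and $t$, matching $\Omega_{12}$ up to the overall coefficient $4(\lambda_{1}-\lambda_{1}^{*})$, and reduces the denominator to a three-term sum (two pure exponentials and one mixed term carrying a quadratic prefactor in $x$ and $t$), matching $\Omega_{22}$. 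Substituting into $q_{[2]}=-2i\,D_{1}[2]/D_{2}[2]$ yields (\ref{sp2}).

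The principal obstacle is purely algebraic bookkeeping: the $4\times4$ determinant expansion with differentiated rows produces many cross terms, and the powers $\lambda_{1}^{k},\lambda_{1}^{*k}$ for $k\le4$ coming from $\omega(\lambda_{1})$ and $\omega(\lambda_{1}^{*})$ must be collected carefully so that the combined exponent $4it\bigl(\lambda_{1}^{2}-2\alpha\lambda_{1}^{3}-4\gamma\lambda_{1}^{4}+2\lambda_{1}^{*2}(2\alpha\lambda_{1}^{*}+4\gamma\lambda_{1}^{*2}-1)\bigr)$ and its relatives in $\Omega_{12},\Omega_{22}$ emerge in exactly the stated form. In practice this step is best delegated to a symbolic computation package; the only essential structural inputs are the exponential form of $\Psi(\lambda)$ on the zero background and the derivative orders $N_{i}-1$ prescribed by (\ref{c}).
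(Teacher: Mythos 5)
Your proposal is correct and follows essentially the same route as the paper: the same vacuum-seed eigenfunctions $f_1=e^{-i\lambda_1 x-i\omega(\lambda_1)t}$, $g_1=e^{i\lambda_1 x+i\omega(\lambda_1)t}$, the same $N=2$ specialization of \eqref{qN} with $\lambda_2=\lambda_1+\epsilon$, the same derivative orders $N_i-1$ (rows three and four differentiated once in $\epsilon$ before the limit), and the final determinant evaluation left to symbolic algebra. Your explicit check that $L$ and $G$ collapse to $\lambda J$ and $\omega(\lambda)J$ on the zero background is a small but welcome addition that the paper merely asserts.
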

\begin{proof}
By setting the spectral parameter $\lambda = \lambda_{1}$ with vacumn seed solution $q_0=0$, we can show that the following eigenfunctions satisfy the Lax pair equations \eqref{lp}, that is 
\begin{subequations}
	\begin{eqnarray}
	f_1&=&exp(-i\lambda_1x-(2i \lambda_1 ^2 -4i\alpha \lambda_1^3 - 8 i \gamma \lambda_1^4)t), \\
	g_1&=&exp(i\lambda_1x+(2i \lambda_1 ^2 -4i\alpha \lambda_1^3 - 8 i \gamma \lambda_1^4)t).
	\label{ef}
\end{eqnarray} 
\end{subequations}
The choice $N=2$ in the solution formula \eqref{qN} provides
\begin{subequations} 
\begin{eqnarray}
	q_{2}=q_0-2i\dfrac{|D_{12}|}{|D_{22}|},
	\label{qsp2}
\end{eqnarray}
where
\begin{equation}
	D_{12}=\begin{pmatrix}f_1&g_1&\lambda_{1}f_1&\lambda_{1}^{2}f_1\\-g_1^*&f_1^{*}&-\lambda_1^*g_1^*&-\lambda_1^{*2}g_1^*\\f_2&g_2&\lambda_2f_2&\lambda_{2}^2f_2\\-g_2^*&f_{2}^*&-\lambda_2^*g_2^*&-\lambda_2^{*2}g_{2}^*\end{pmatrix},\;\;
D_{22}=\begin{pmatrix}f_1&g_1&\lambda_{1}f_1&\lambda_{1}g_1\\-g_1^*&f_1^{*}&-\lambda_1^*g_1^*&\lambda_1^{*}f_1^*\\f_2&g_2&\lambda_2f_2&\lambda_{2}g_2\\-g_2^*&f_{2}^*&-\lambda_2^*g_2^*&\lambda_2^{*}f_{2}^*\end{pmatrix}.
	\label{spD}
\end{equation}
\end{subequations}
\par Imposing the limit $\lambda_2 =\lambda_1 + \epsilon$ in Eqs.~\eqref{qsp2} and \eqref{spD} and expanding the eigenfunctions $f_1$ and $g_1$ in Taylor series at $\epsilon$, we obtain the following expressions for $D_{12}$ and $D_{22}$, namely
\begin{subequations}
\begin{eqnarray}
	\lim_{\epsilon\rightarrow 0}D_{12}&=&\begin{pmatrix}\vspace{0.1cm}f_1&g_1&\lambda_{1}f_1&\lambda_{1}^{2}f_1\\\vspace{0.15cm}-g_1^*&f_1^{*}&-\lambda_1^*g_1^*&-\lambda_1^{*2}g_1^*\\\vspace{0.1cm}f_{1l}'&g_{1l}'&((\lambda_1 + \epsilon)f_{1l})'&((\lambda_1 + \epsilon)^2f_{1l})'\\\vspace{0.1cm}-g_{1l}^{*'}&f_{{1l}}^{'*}&-({(\lambda_1^* + \epsilon)g_{1l}^*})'&-({(\lambda_1^* + \epsilon)^{2}g_{{1l}}^*})'\end{pmatrix},
\end{eqnarray}
\begin{eqnarray}
\lim_{\epsilon\rightarrow 0}	D_{22}&=&\begin{pmatrix}\vspace{0.1cm}f_1&g_1&\lambda_{1}f_1&\lambda_{1}g_1\\ \vspace{0.1cm}-g_1^*&f_1^{*}&-\lambda_1^*g_1^*&\lambda_1^{*}f_1^*\\\vspace{0.1cm}f_{1l}'&g_{1l}'&((\lambda_1 + \epsilon)f_{1l})'&((\lambda_1 + \epsilon)g_{1l})'\\\vspace{0.1cm}-g_{1l}^{*'}&{f_{{1l}}^*}'&-((\lambda_1^* + \epsilon)g_{1l}^*)'&((\lambda_1^* + \epsilon)f_{{1l}}^*)'\end{pmatrix},
	\label{spD1}
\end{eqnarray}
with
\begin{eqnarray}
f_{1l}&=&e^{-i(\lambda_1+\epsilon)x-(2i (\lambda_1+ \epsilon) ^2 -4i\alpha (\lambda_1+\epsilon)^3 - 8 i \gamma (\lambda_1+\epsilon)^4)t},\notag\\
g_{1l}&=&e^{i(\lambda_1+\epsilon)x+(2i (\lambda_1+\epsilon) ^2 -4i\alpha (\lambda_1+\epsilon)^3 - 8 i \gamma (\lambda_1+\epsilon)^4)t},
\end{eqnarray}
\label{mo}
\end{subequations}
 and prime in Eq.~\eqref{mo} represents differentiation with respect to $\epsilon$.  Substituting the above expressions (\ref{mo}) in (\ref{qsp2}) we end up (\ref{sp2}).
 \end{proof}
\par The second order smooth positon solution \eqref{sp2} and its corresponding contour plots are produced in Fig.~\ref{sp22}. We investigate the effect of higher order odd and even nonlinear terms on the solution by varying the value of the parameters $\alpha$ and $\gamma$. In Figure 1(a) we draw the second order smooth positon solution of the NLS equation ($\alpha= \gamma= 0$). If we consider only the third order dispersion term and neglect the fourth order term, that is $\gamma =0 $ and $\alpha = 1$, we arrive at the Hirota equation. The associated smooth positon solution is drawn in Fig.~1(b). Here, we can observe a shrink in the width of two smooth positons accompanied by a decrease in the distance between them. As far as smooth positons are concerned we observe a slight shift towards right in their orientation.
\par Now let us consider only the fourth order dispersion term in the picture, that is $\alpha=0$ and $\gamma=1$. The underlying equation becomes the fourth order NLS equation. In this case, we can see a higher compression in the width of smooth positons and a drastic change in their orientations, see Fig.~1(c). It is clear that the fourth order nonlinear term ($\gamma$) produces more compression and directional changes in the smooth positons when compare to the third order nonlinear term ($\alpha$). To observe the combined effect of both third and fourth order nonlinear terms, we set $\alpha= \gamma =1 $ and analyze how these two  higher order nonlinear and dispersion terms ($\alpha$ and $\gamma$) together effect the basic NLS positon solution. It is evident from Fig.~1(d) that these two higher order terms introduce a larger compression effect and appreciable changes in their orientations.
\begin{figure}[hb!]
	\includegraphics[width=\linewidth]{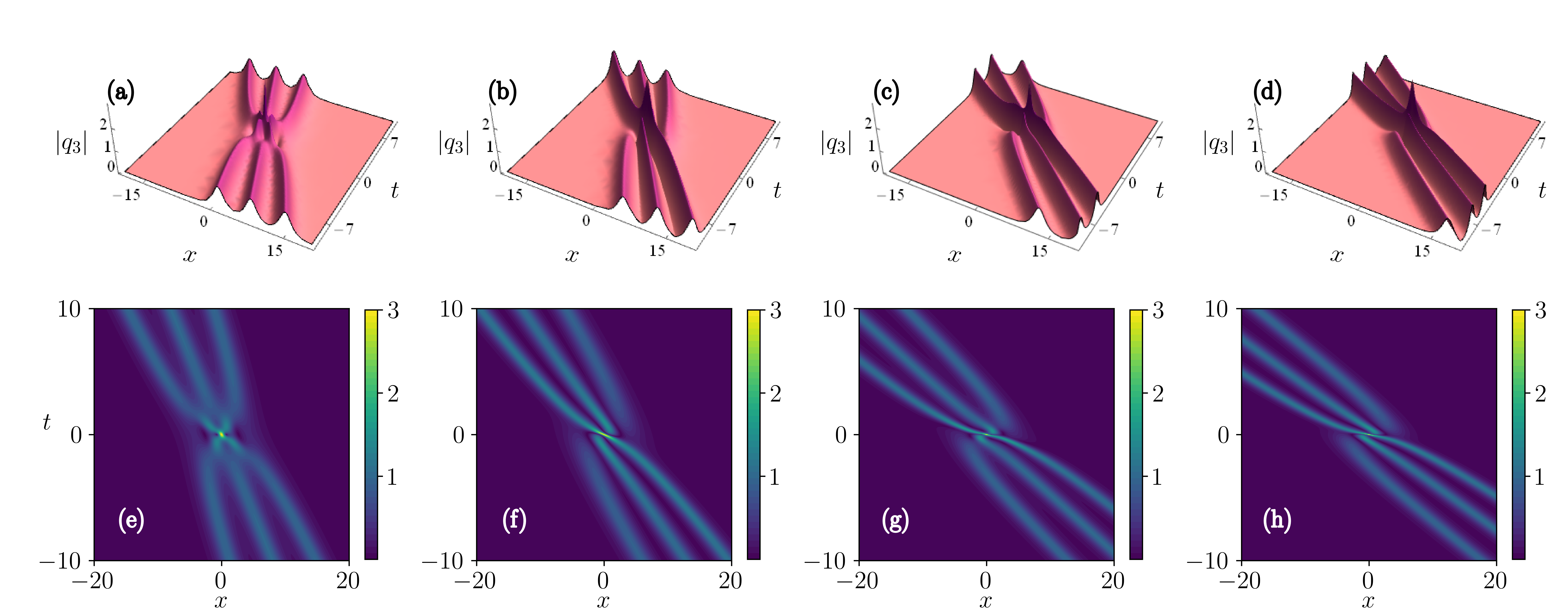}
	\caption{ Third order smooth-positon solution of ENLSE with the parameter values  $\lambda_1=0.2+0.5i$, (a) $\alpha= \gamma=0$, (b) $\alpha=1;\;\gamma=0$, (c) $\alpha=0;\;\gamma=1$, (d) $\alpha= \gamma = 1$.  Figs.~(e)-(h) are the corresponding contour illustration of Figs.~(a)-(d) respectively.}
\end{figure}

\subsection{Higher order smooth positon solutions}

\par To derive the third order smooth positon solution of ENLSE, we consider $N=3$ in Eq.~\eqref{qn} and impose a limit on the spectral parameters as $\lambda_i= \lambda_1+\epsilon, i = 2,3$. The resultant determinant takes the form
{\scriptsize \begin{subequations}\label{spD3}
		 \begin{equation}
		\lim_{\epsilon\rightarrow 0}D_{13}=\begin{pmatrix}\vspace{0.2cm}f_1&g_1&\lambda_1f_1&\lambda_1g_1&\lambda_1^2f_1&\lambda_{1}^{3}f_1\\\vspace{0.2cm}
		-g_1^*&f_1^{*}&-\lambda_{1}^*g_1^*&\lambda_{1}^*f_1^*&-\lambda_{1}^{*2}g_1^*&-\lambda_{1}^{*3}g_1^*\\\vspace{0.2cm}
		f_{1l}'&g_{1l}'&((\lambda_1 + \epsilon)f_{1l})'&((\lambda_1 + \epsilon)g_{1l})'&((\lambda_1 + \epsilon)^2f_{1l})'&((\lambda_1 + \epsilon)^3f_{1l})'\\\vspace{0.2cm}
		{-g_{1l}^*}'&{f_{1l}^*}'&-((\lambda_1^* + \epsilon)g_{1l}^*)'&((\lambda_1^* + \epsilon)f_{{1l}}^{*})'&-((\lambda_1^* + \epsilon)^{2}g_{1l}^*)'&-((\lambda_1^* + \epsilon)^{3}g_{1l}^*)'\\\vspace{0.2cm}
		f_{1l}''&g_{1l}''&((\lambda_1 + \epsilon)f_{1l})''&((\lambda_1 + \epsilon)g_{1l})''&((\lambda_1 + \epsilon)^2f_{1l})''&((\lambda_1 + \epsilon)^{3}f_{1l})''\\\vspace{0.2cm}
		{-g_{1l}^*}''&{f_{1l}^{*}}''&-((\lambda_1^* + \epsilon)g_{1l}^{*})''&((\lambda_1^* + \epsilon)f_{1l}^{*})''&-((\lambda_1^* + \epsilon)^{2}g_{1l}^{*})''&-((\lambda_1^* + \epsilon)^{3}g_{1l}^{*})''\end{pmatrix},	
\end{equation}
\begin{equation}	\lim_{\epsilon\rightarrow 0}D_{23}=\begin{pmatrix}\vspace{0.2cm}f_1&g_1&\lambda_1f_1&\lambda_1g_1&\lambda_1^2f_1&\lambda_{1}^{2}g_1\\\vspace{0.2cm}
		-g_1^*&f_1^{*}&-\lambda_{1}^*g_1^*&\lambda_{1}^*f_1^*&-\lambda_{1}^{*2}g_1^*&\lambda_{1}^{*2}f_1^*\\\vspace{0.2cm}
		f_{1l}'&g_{1l}'&((\lambda_1 + \epsilon)f_{1l})'&((\lambda_1 + \epsilon)g_{1l})'&((\lambda_1 + \epsilon)^2f_{1l})'&((\lambda_1 + \epsilon)^2g_{1l})'\\\vspace{0.2cm}
		{-g_{1l}^*}'&{f_{1l}^*}'&-((\lambda_1^* + \epsilon)g_{1l}^*)'&((\lambda_1^* + \epsilon)f_{2}^{*})'&-((\lambda_1^* + \epsilon)^{2}g_{1l}^*)'&((\lambda_1^* + \epsilon)^{2}f_{1l}^*)'\\\vspace{0.2cm}
		f_{1l}''&g_{1l}''&((\lambda_1 + \epsilon)f_{1l})''&((\lambda_1 + \epsilon)g_{1l})''&((\lambda_1 + \epsilon)^2f_{1l})''&((\lambda_1 + \epsilon)^{2}g_{1l})''\\\vspace{0.2cm}
		{	-g_{1l}^{*}}''&{f_{1l}^{*}}''&-((\lambda_1^* + \epsilon)g_{1l}^{*})''&((\lambda_1^* + \epsilon)f_{1l}^{*})''&-((\lambda_1^* + \epsilon)^{2}g_{1l}^
		{*})''&((\lambda_1^* + \epsilon)^{2}f_{1l}^{*})''\end{pmatrix}.	 	
\end{equation}\end{subequations}}
 \par  The third order smooth positon solution of Eq.~(\ref{eq1}) can be obtained by substituting the  expressions \eqref{spD3} in \eqref{qN}. The explicit form of the obtained solution is very lengthy and so we do not reproduce its explicit form here and present only the plot of third order smooth positon solution in Fig.~2. One can directly identify the third order smooth positon solution of the NLS equation from the constructed solution by setting $\alpha= \gamma =0$ in it. Restricting the parameters in the fashion $\alpha=1, \gamma=0$ and $\alpha =0, \gamma =1$, in the derived solution, we can visualize the third order smooth positon of the Hirota equation (Fig.~2(b)) and the fourth order NLS equation (Fig.~2(c)) respectively. To learn the higher order effect, we fix $\alpha = \gamma =1 $ in the constructed solution. When we increase the value of third order dispersion term $\alpha$, the position of the smooth positon changes, the distance between the positons decreases and the width of the positons get reduced.  By increasing the value of the parameter $\gamma$, one can visualize more compression and also appreciable changes in their orientation. The net effect observed in this case is higher than the one produced solely by the nonlinear term $\alpha$. When we increase the value of both the parameters $\alpha$ and $\gamma$ the positons move close to each other which is depicted in Fig.~2(d). Considering $N=4$ in Eq.~\eqref{qn} and imposing the spectral parameters be $\lambda_i= \lambda_1+\epsilon, i = 2,3,4$, we can derive the fourth order smooth positon solution of Eq.\eqref{eq1}. We have also analyzed the effect of higher order nonlinear and dispersion terms with the help of this solution as we have done in the second and third order smooth positon cases. Our investigations reveal that the fourth order positon solution also exhibits a similar behaviour as that of lower order positon solutions which can be confirmed from Fig.~3.  
 \begin{figure}
 	\includegraphics[width=\linewidth]{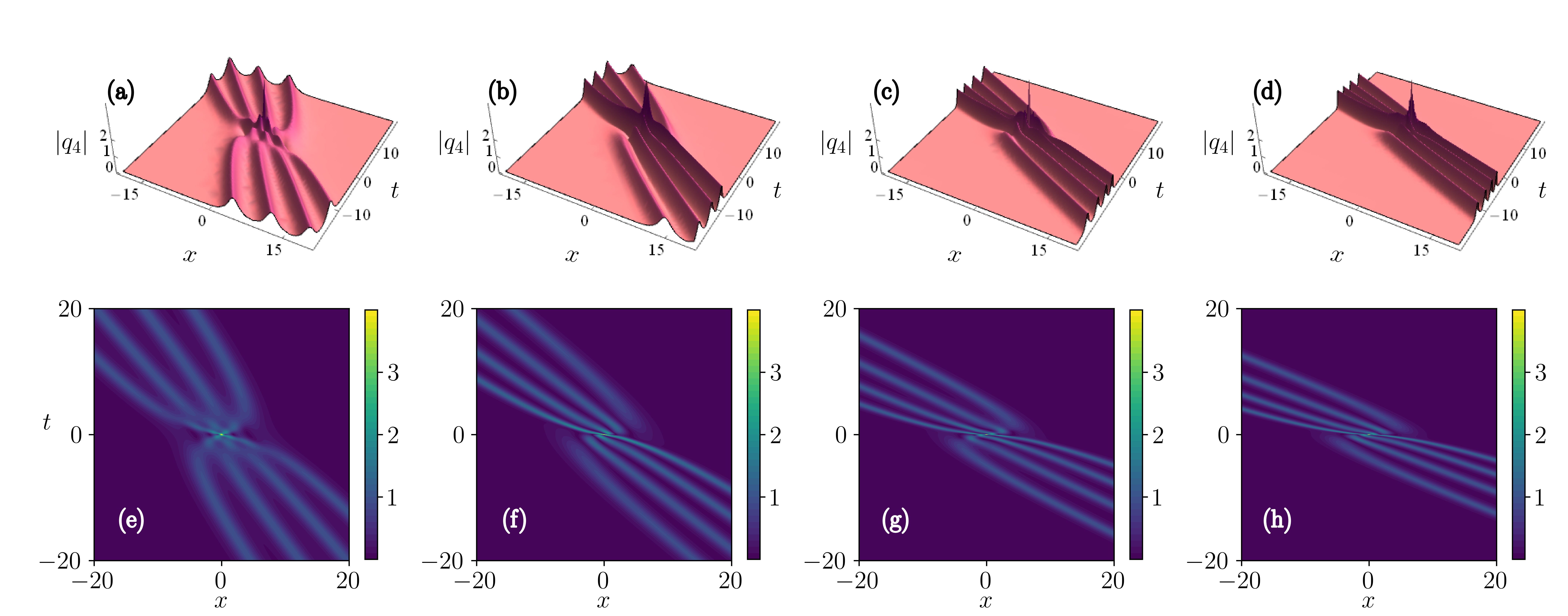}
 	\caption{Fourth order smooth-positon solution of ENLSE with parameter values $\lambda_1=0.2+0.5i$, (a) $\alpha= \gamma=0$, (b) $\alpha=1;\;\gamma=0$, (c) $\alpha=0;\;\gamma=1$, (d) $\alpha= \gamma = 1$.  Figs.~(e)-(h) are the corresponding contour illustration of Figs.~(a)-(d) respectively.}
 \end{figure}
 
\section{Breather positon solutions}
In Sec.~3, we discussed about the higher order smooth positon solutions of Eq.~\eqref{eq1}. Now, we construct the b-p solutions of the ENLSE.
\subsection{Second order b-p solution of ENLSE}
Upon solving the Lax pair  Eq.~\eqref{lp} with the plane wave solution $ q_0=  c e^{2ic^2(1+3 \gamma c^2)t}$ as seed solution, where $c$ is an arbitrary real parameter,  we obtain the following eigenfunctions, namely
 \begin{subequations}
\begin{eqnarray}
	f_1 = A_1 e^{ic^2(1+3 \gamma c^2)t}; \qquad g_1 = A_2 e^{-ic^2(1+3 \gamma c^2)t},\label{qbp}
\end{eqnarray}
where
\begin{eqnarray}
	A_1& =& \dfrac{c M_1}{i(\sqrt{\lambda_1^2 + c^2}+\lambda_1)} e^{\eta}+ \dfrac{c M_2}{i(-\sqrt{\lambda_1^2 + c^2}+\lambda_1)}e^{-\eta}, \nonumber\\
	\nonumber\\
	A_2 &=&M_1 e^{\eta}+ M_2 e^{-\eta}, \label{eta}
\end{eqnarray}
with
\begin{eqnarray}
	M_1& =&\dfrac{ i c + 2(\sqrt{\lambda_1^2 + c^2}+\lambda_1) }{2\sqrt{\lambda_1^2 + c^2}}, \qquad 	M_2 =\dfrac{- i c + 2(\sqrt{\lambda_1^2 + c^2}-\lambda_1) }{2\sqrt{\lambda_1^2 + c^2}}, \nonumber \\\nonumber\\
	\eta&= &i\sqrt{\lambda_1^2 + c^2}(x+2(\lambda_1 - 2 \alpha \lambda_1^2 - 
	4 \gamma \lambda_1^3 + c^2 (\alpha + 2 \gamma \lambda_1))t)\label{a4}.
\end{eqnarray}
\end{subequations}
\begin{figure}
	\includegraphics[width=\linewidth]{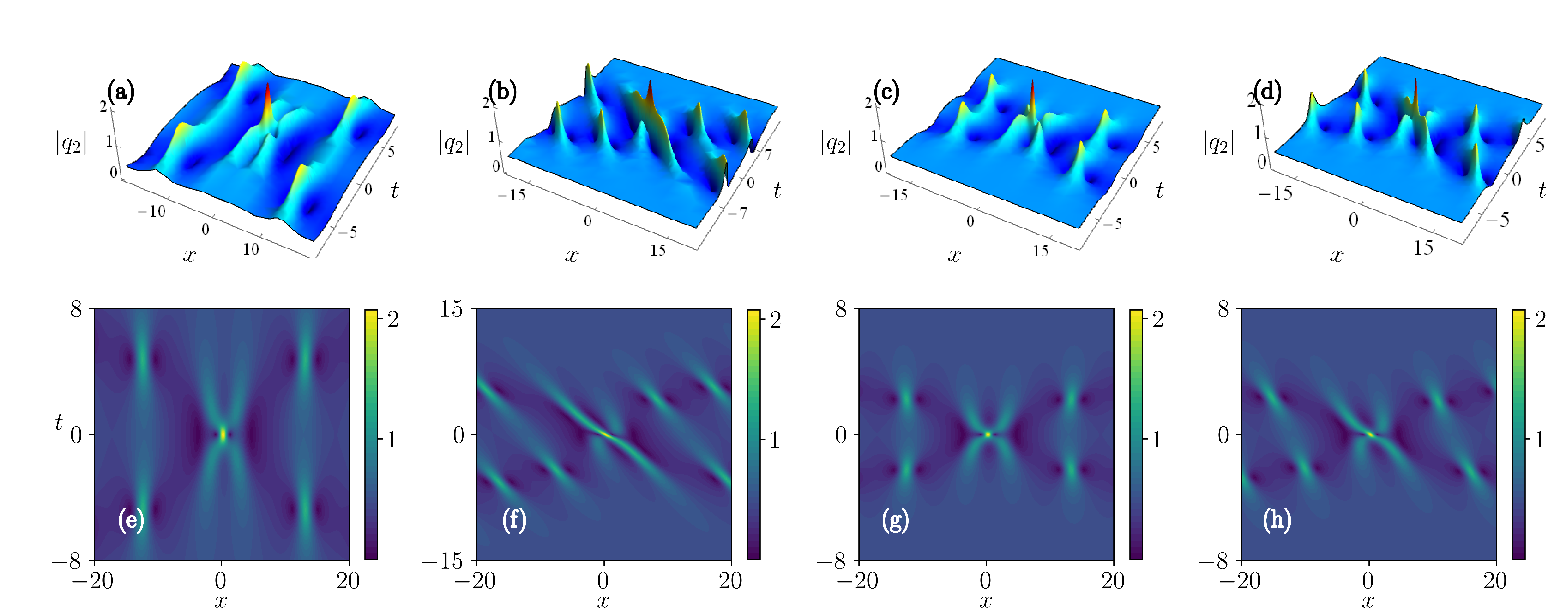}
	\caption{Second order b-p solution of ENLSE with the parameter values $\lambda_1=0.4i$ and $c=0.5$, (a) $\alpha= \gamma=0$, (b) $\alpha=1;\;\gamma=0$, (c) $\alpha=0;\;\gamma=1$, (d) $\alpha= \gamma = 1$.  Figs.~(e)-(h) are the corresponding contour illustration of Figs.~(a)-(d) respectively.}
\end{figure}
 \begin{figure}
	\includegraphics[width=\linewidth]{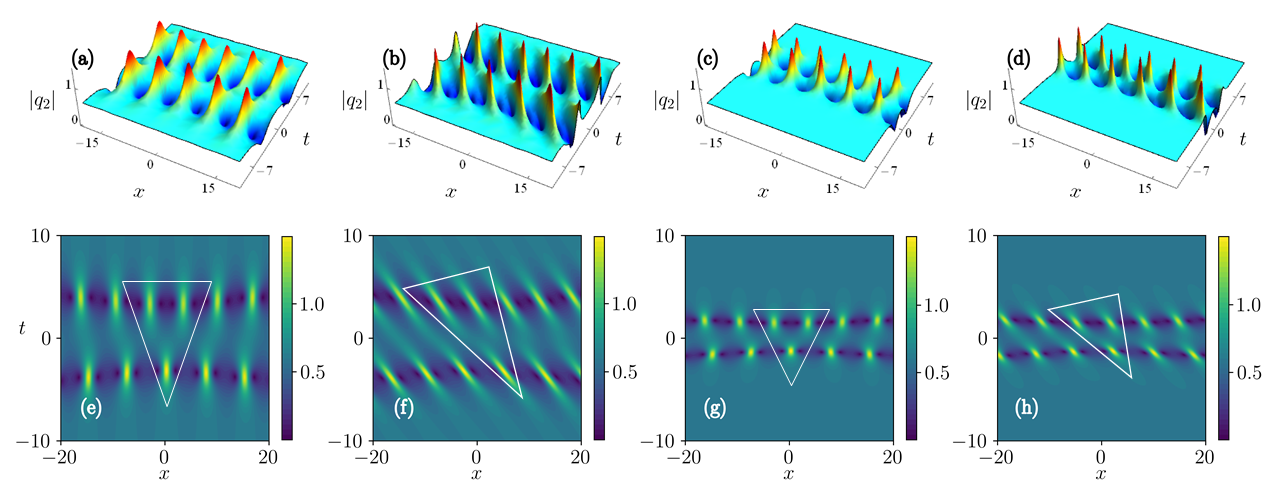}
	\caption{Second order triplet b-p solution of ENLSE with the parameter values $\lambda_1=0.4i$ and $c=0.6$, (a) $\alpha= \gamma=0$, (b) $\alpha=1;\;\gamma=0$, (c) $\alpha=0;\;\gamma=1$, (d) $\alpha= \gamma = 1$.  Figs.~(e)-(h) are the corresponding contour illustration of Figs.~(a)-(d).}
\end{figure}
\par The second order b-p solution of  the ENLSE can be identified by imposing the same limit on the spectral parameter $\lambda_2$, that is $\lambda_2 \rightarrow \lambda_{1}+ \epsilon$. 
Substituting the eigenfunctions found \eqref{qbp} in Eq.~\eqref{qsp2}, we obtain 
\begin{equation}
q_{2}^{[b-p]}=c e^{2ic^2(1+3 \gamma c^2)t}-2i\dfrac{|D_{12}^{[b-p]}|}{|D_{22}^{[b-p]}|}.\label{qbp2}
\end{equation} 
\par The 3-D plot of the second order b-p solution is shown in Fig.~4 (the explicit expression of Eq.~\eqref{qbp2} is very lengthy and so we do not print them here). In Fig.~4(a), we depict the second order b-p solution of the NLS equation ($\alpha= \gamma=0 $). In this figure, one may observe that the central region resembles the pattern of second order rogue wave. To examine the effect of higher order nonlinear and dispersion terms, first we choose the parameters in the fashion $\alpha=1$, $\gamma=0$ whose outcome is produced in Fig.~4(b). We observe a compression in the width of b-p and a change in their  orientation. Further, as we visualize the second order b-p gets tilted and the distance between two b-p also get decreased. As far as the fourth order NLS equation is concerned, that is $\alpha=0$, $\gamma=1$, the b-p solution exhibits larger compression and deviations in their orientation when compared to the previous case, see Fig.~4(c). For $\alpha= \gamma =1 $, the width of the b-p including the central region is highly compressed and they come very near to each other. The drastic changes in their orientation and tilt in the pulses by the nonlinear terms on the second order b-p solution are shown in Fig.~4(d). 
\par The dynamical changes in the structure of waves can be observed by adding an arbitrary constant $s_0 \epsilon$, that is $s_0 = s_{0r}+is_{0i}$ in the exponential function $\eta$ given in Eq.~\eqref{eta}. By choosing $s_{0r} =25$ and $s_{0i} =0$, the rogue wave like structure in the central region forms a triangular pattern of three single breather pulses (Fig.~5(a)). Analogously, we observe that the higher order nonlinear and dispersion terms exhibit  the same characteristics that we observed in the second order b-p case. The investigations reveal that the higher order parameters produce a compression in the width of smooth positons, appreciable changes in their direction and they also tilt the pulses which can all be visualized from Fig.~5.
 \begin{figure}
	\includegraphics[width=\linewidth]{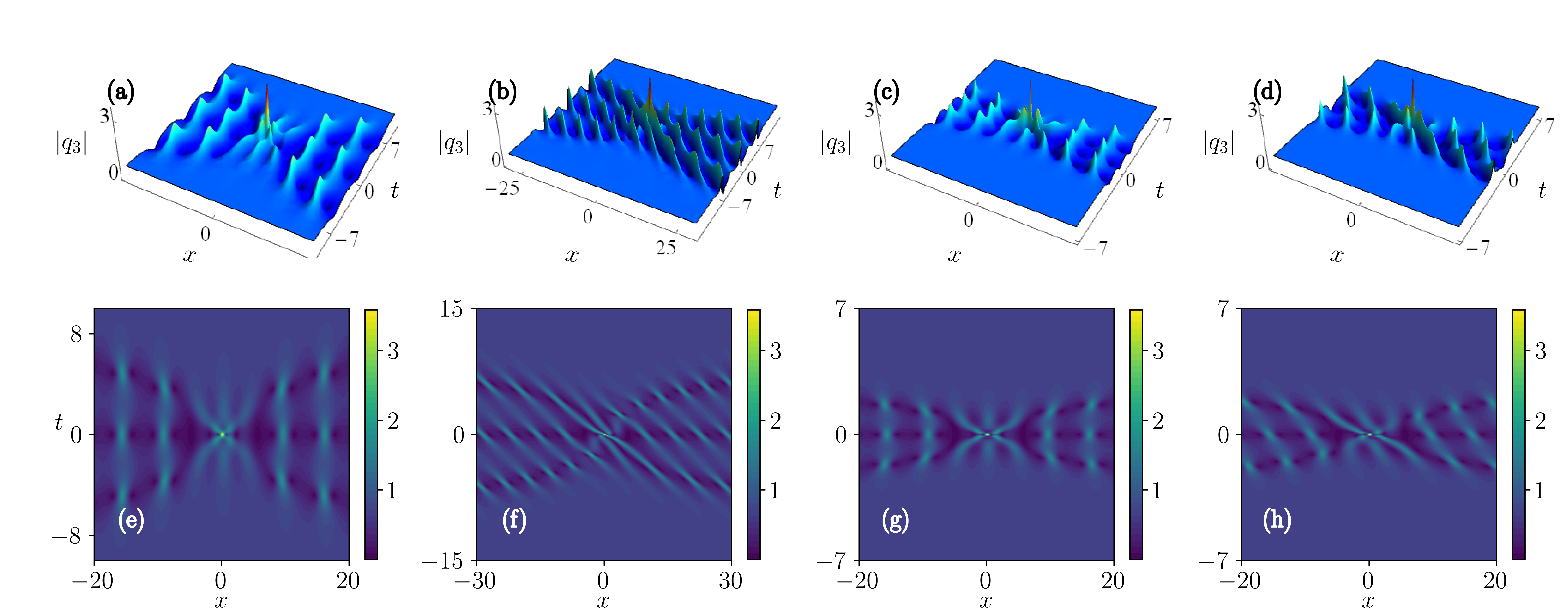}
	\caption{Third order b-p solution of ENLSE with the parameter values $\lambda_1=0.5i$ and $c=0.7$, (a) $\alpha= \gamma=0$, (b) $\alpha=1;\;\gamma=0$, (c) $\alpha=0;\;\gamma=1$, (d) $\alpha= \gamma = 1$. Figs.(e)-(h) are the corresponding contour illustration of Figs.(a)-(d).}
\end{figure}
\begin{figure}
	\includegraphics[width=\linewidth]{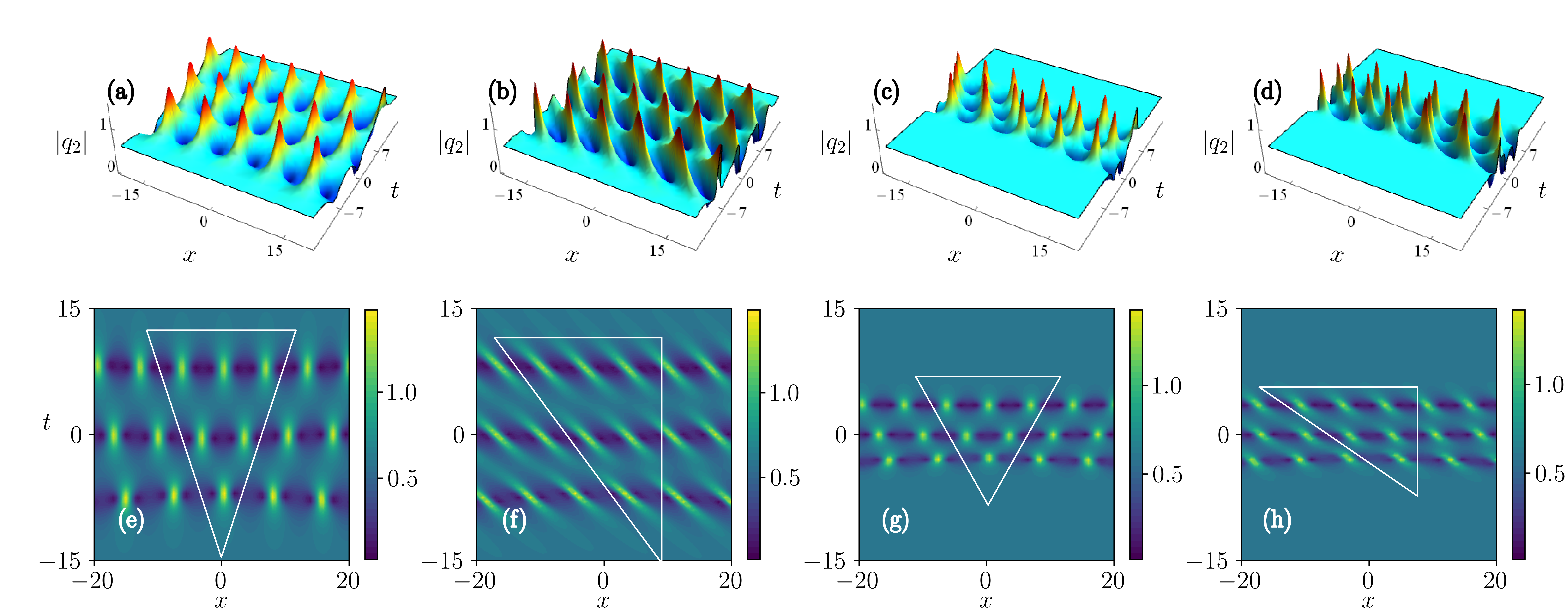}
	\caption{Third order triplet b-p solution of ENLSE with the parameter values $\lambda_1=0.4i$ and $c=0.7$ , (a) $\alpha= \gamma=0$, (b) $\alpha=1;\;\gamma=0$, (c) $\alpha=0;\;\gamma=1$, (d) $\alpha= \gamma = 1$. Figs.(e)-(h) are the corresponding contour illustration of Figs.(a)-(d).}
\end{figure}

\begin{figure}
	\includegraphics[width=\linewidth]{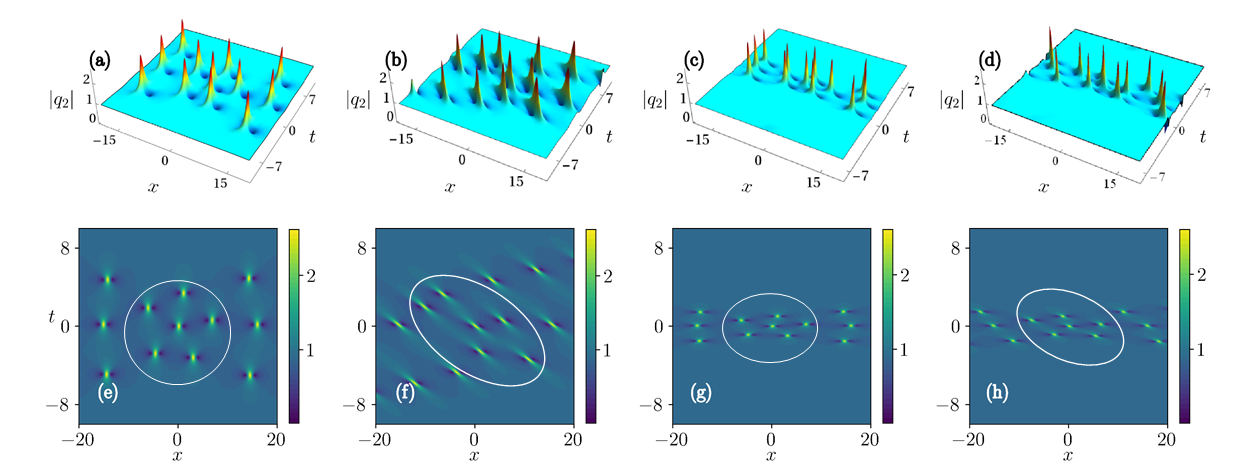}
	\caption{Third order circular b-p solution of ENLSE with the parameters $\lambda_1=0.85i$ and $c=0.9$ , (a) $\alpha= \gamma=0$, (b) $\alpha=1;\;\gamma=0$, (c) $\alpha=0;\;\gamma=1$, (d) $\alpha= \gamma = 1$.  Figs.~(e)-(h) are the corresponding contour illustration of Figs.~(a)-(d).}
\end{figure}

\subsection{ Higher order b-p solutions}
By considering $N=3$ in Eq.~\eqref{qn} with the eigenfunctions given in \eqref{qbp} and restricting the spectral parameters in the same manner as $\lambda_i \rightarrow \lambda_1 + \epsilon, \;i=2,3$, we can derive the third order b-p positon solution of the ENLSE. Since the explicit expression is quite cumbersome we present only the plots of the third order b-p solution in Fig.~6. The third order RW structure can be seen in the central region of the third order b-p solution. From Fig.~6, one may observe that the higher order nonlinear and dispersion terms produce the same effect on third order b-p solution which we come across earlier in the second order b-p case. To analyze the changes in the structure of b-p solution, we introduce two arbitrary constants, say $s_0 \epsilon$ and $s_1{\epsilon}$ in Eq.~\eqref{a4}, that is 
\(\eta= i\sqrt{\lambda_1^2 + c^2}(x+2(\lambda_1 - 2 \alpha \lambda_1^2 - 
4 \gamma \lambda_1^3 + c^2 (\alpha + 2 \gamma \lambda_1))t+s_0 \epsilon + s_1\epsilon^2)\) with $s_1 = s_{1r}+ i s_{1i}$. Upon introducing this, the central region splits up into six separate b-p pulses and forms a triangular pattern as shown in Fig.~7 for the values $s_{0r} = s_{1r}= 0$ and $s_{1r}=25,\; s_{1i} = 0$. If we consider $s_{1r}= s_{1i} = 600$, the central region forms a circular structure which is illustrated in Fig.~8. Higher compression effect and a change in their orientation is observed in the central region of the triangular and circular patterns for larger values of $\alpha$ and $\gamma$. These outcomes are demonstrated in Figs.~7 and 8. The arbitrary nonlinear parameter $\alpha$  is the main cause to produce the tilt in the b-p solution. Finally, we plot the fourth order b-p solution ($N=4$) in Fig. 9 and here also we observe the same behavioural changes. From the above outcome, we conclude that the width and direction of smooth positon and b-p solutions are highly sensitive to higher order effects. 
\begin{figure}
	\includegraphics[width=\linewidth]{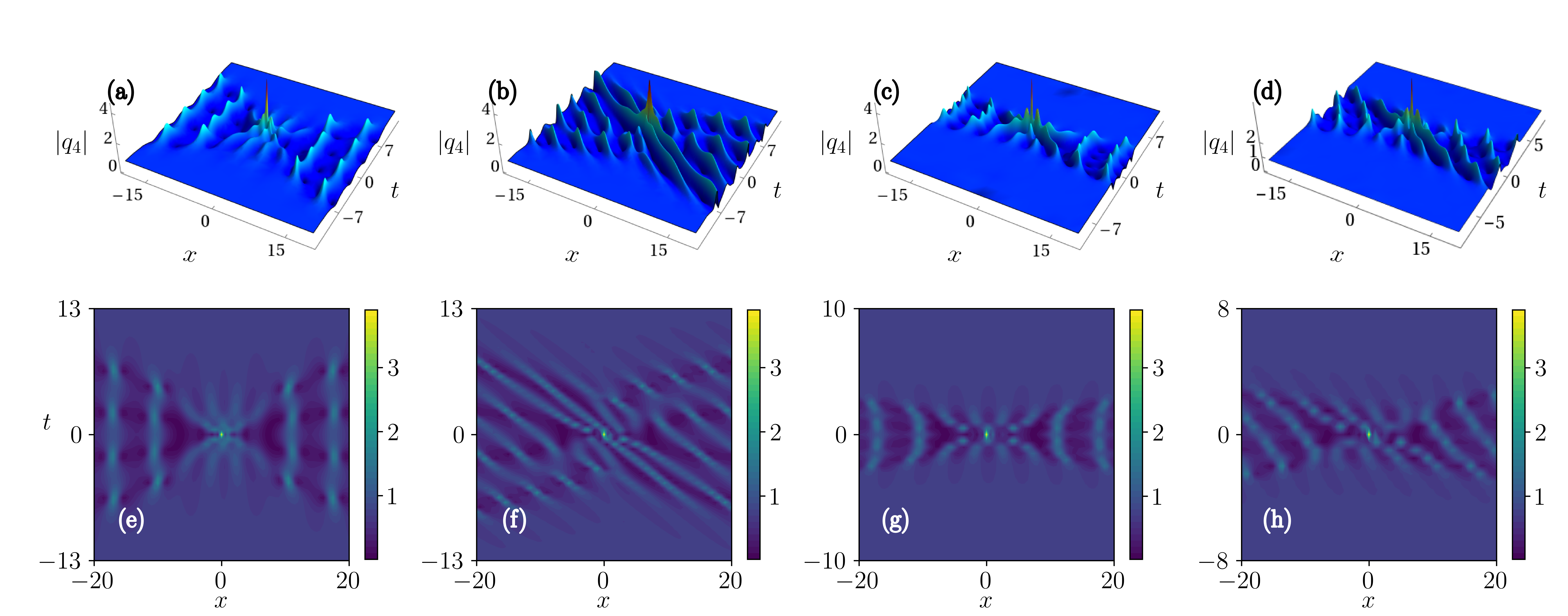}
	\caption{Fourth order b-p solution of ENLSE with the paramter values $\lambda_1=0.5i$ and $c=0.7$ , (a) $\alpha= \gamma=0$, (b) $\alpha=1;\;\gamma=0$, (c) $\alpha=0;\;\gamma=1$, (d) $\alpha= \gamma = 1$.Figs.~(e)-(h) are the corresponding contour illustration of Figs.~(a)-(d).}
\end{figure}
\begin{figure}
	\includegraphics[width=\linewidth]{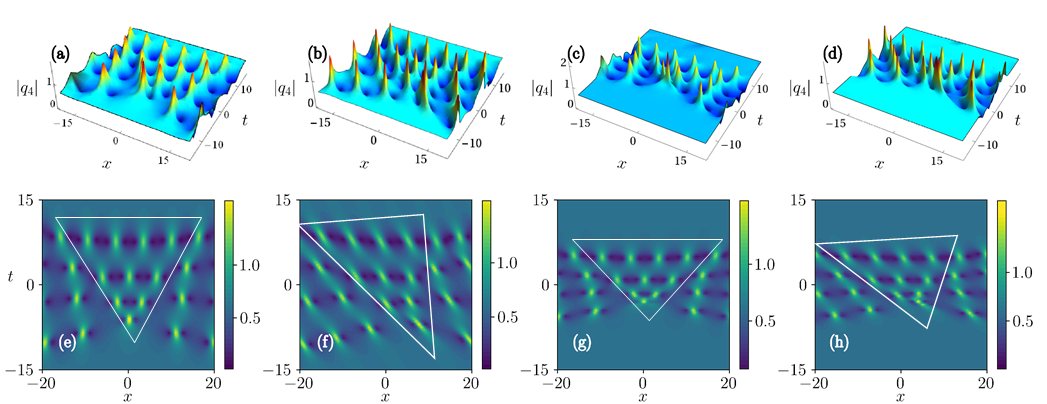}
	\caption{Fourth order triangular b-p solution of ENLSE with the paramter values $\lambda_1=0.5i$ and $c=0.7$ , (a) $\alpha= \gamma=0$, (b) $\alpha=0.5;\;\gamma=0$, (c) $\alpha=0;\;\gamma=0.5$, (d) $\alpha= \gamma = 0.5$. Figs.~(e)-(h) are the corresponding contour illustration of Figs.~(a)-(d).}
\end{figure}
\begin{figure}
	\includegraphics[width=\linewidth]{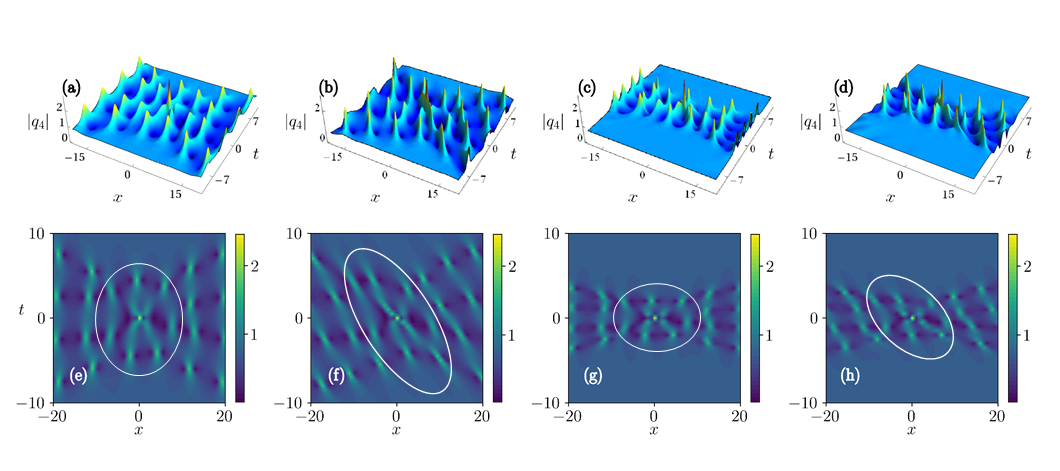}
	\caption{Fourth order circular b-p solution of with the paramter values $\lambda_1=0.5i$ and $c=0.7$ , (a) $\alpha= \gamma=0$, (b) $\alpha=0.5;\;\gamma=0$, (c) $\alpha=0;\;\gamma=0.5$, (d) $\alpha= \gamma = 0.5$. Figs.~(e)-(h) are the corresponding contour illustration of Figs.~(a)-(d).}
\end{figure}
\section{Asymptotic analysis}
In this section, we study the asymptotic behaviour of the smooth positons of ENLSE. The behaviour of two positon solution in the asymptotic regime $t\rightarrow\pm \infty$ becomes trivial since in the denominator the variable $t$ appears both in exponential and polynomial forms (see Eq.~\eqref{sp2}). Hence we adapt the procedure followed by Cen \textit{et. al.} \cite{kdv2,Hirota2} to investigate the asymptotic behaviour of second order smooth positon solution of the ENLSE. In this analysis, we compare the maximum of one soliton solution with the shifted multi positon solutions in the asymptotic limits. 
\begin{thm}
The time dependent displacement between two positon, after collision, is
\begin{subequations}
	\begin{eqnarray}
		\Delta(t)= \dfrac{1}{a}\log[4 a^2 |t| \sqrt{\hat \kappa}],\label{g}
	\end{eqnarray}
	where
	\begin{equation}
		\hat\kappa = a^2 \alpha ^2+(1+3 b \alpha )^2+4 \left(a^2-3 b^2+b \left(a^2-9 b^2\right) \alpha \right) \gamma +4
		\left(a^4-2 a^2 b^2+9 b^4\right) \gamma ^2.
	\end{equation}
\label{11}
\end{subequations}
\end{thm}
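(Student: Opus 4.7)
The plan is to follow the strategy developed by Cen and coworkers in \cite{kdv2,Hirota2}: compare the peaks of the explicit two-positon formula \eqref{sp2} in the regime $|t|\to\infty$ with the crest trajectories of an undisplaced one-soliton constituent. I would first parametrize the spectral parameter as $\lambda_1 = b+ia$ with $a>0$, so that $\lambda_1-\lambda_1^* = 2ia$, $\lambda_1+\lambda_1^*=2b$, $|\lambda_1|^2=a^2+b^2$ and $\lambda_1^2+\lambda_1^{*2}=2(b^2-a^2)$. Substituting into $\Omega_{12}$ and $\Omega_{22}$, the denominator decomposes into two exponential contributions $e^{4ix\lambda_1^{*}}$ and $e^{4ix\lambda_1}$ together with a cross-exponential $e^{2ix(\lambda_1+\lambda_1^{*})+\cdots}$ multiplied by a quadratic polynomial $P_2(x,t)$ in $(x,t)$. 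The common traveling-wave velocity $v_g$ of the two soliton constituents is read off from the imaginary parts of the exponents, giving the co-moving coordinate $\xi = x-v_g t$.

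Along an asymptotic trajectory $x = v_g t + y$ with $y=O(\log|t|)$, the two pure exponentials contribute $e^{\mp 4a(\xi)}$ while the polynomial prefactor $P_2$ is $O(t^2)$. The peaks of $|q_2|^2$ lie where the dominant pure exponential balances the polynomial piece, i.e.
\begin{equation}
e^{\pm 4ay} \;\sim\; P_2(v_g t + y,\,t) \;\sim\; 16\,a^4\,\hat\kappa\,t^2 .
\end{equation}
Solving for $y$ then yields $y_\pm(t) = \pm\tfrac{1}{2a}\log\!\bigl[16 a^4\hat\kappa\,t^2\bigr]$, so that the time-dependent separation between the two asymptotic crests is exactly $\Delta(t) = y_+(t)-y_-(t) = \tfrac{1}{a}\log[4a^2|t|\sqrt{\hat\kappa}]$, which is the claim \eqref{g}. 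The constant piece $\tfrac{1}{a}\log(4a^2)$ is the same shift that one extracts from a single-soliton envelope of the form $4a^2\,\mathrm{sech}^2[2a(\xi-\Delta/2)]$, and only the $\sqrt{\hat\kappa}\,|t|$ factor carries nontrivial time dependence.

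The real work is the explicit evaluation of $\hat\kappa$ as the leading $t^2$ coefficient of $P_2(v_g t+y,t)$. Reading off $P_2$ from $\Omega_{22}$, the relevant bracket is
\begin{equation}
32\,\lambda_1\lambda_1^{*}(\lambda_1-\lambda_1^{*})^2\bigl[1+9\alpha^2|\lambda_1|^2+64\gamma^2|\lambda_1|^4-8\gamma(\lambda_1^2+\lambda_1^{*2})+3\alpha(\lambda_1+\lambda_1^{*})(8\gamma|\lambda_1|^2-1)\bigr] t^2
\end{equation}
together with the $8tx(\lambda_1-\lambda_1^{*})^2[\,3\alpha\lambda_1^2+8\gamma\lambda_1^3-\lambda_1+\mathrm{c.c.}]$ cross-term (and the $2x^2(\lambda_1-\lambda_1^{*})^2$ piece), all of which contribute to the $t^2$ coefficient once $x$ is replaced by $v_g t$. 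Translating $\lambda_1,\lambda_1^{*}$ into $a,b$ using the identities above and collecting terms in powers of $\alpha$ and $\gamma$ should yield $a^2\alpha^2+(1+3b\alpha)^2+4[a^2-3b^2+b(a^2-9b^2)\alpha]\gamma+4(a^4-2a^2b^2+9b^4)\gamma^2$, i.e. $\hat\kappa$.

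The main obstacle is purely algebraic: one must keep careful track of how $v_g$, which itself depends on $\alpha$ and $\gamma$, feeds into the $x^2$ and $tx$ brackets so that they combine coherently with the explicit $t^2$ coefficient into a single polynomial in $\alpha,\gamma$. A useful sanity check along the way is that $\alpha=\gamma=0$ must reduce $\hat\kappa$ to unity, recovering the standard NLS displacement $\Delta(t)=\tfrac{1}{a}\log(4a^2|t|)$ derived in \cite{Hirota2}; the proposed $\hat\kappa$ satisfies this immediately, and the linear-in-$\alpha$ and linear-in-$\gamma$ coefficients $6b$ and $4(a^2-3b^2)$ respectively encode how the Hirota and quartic corrections shift the collision delay.
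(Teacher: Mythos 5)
Your overall strategy --- follow Cen--Fring, pass to the co-moving frame, and fix the logarithmic offset by balancing the pure exponentials in $\Omega_{22}$ against the $O(t^2)$ polynomial prefactor --- is essentially the paper's. The paper implements the same balance slightly differently: it first derives the one-soliton $q_1=a\,e^{iz}\sech[a(x-\tilde x)]$, then substitutes $x\to\tilde x+\tfrac1a\log(\kappa|t|)$ into $|q_2|$, collects the dominant powers of $|t|$ to get $|q_2|\to 8a^3\kappa\sqrt{\hat\kappa}/(16a^4\hat\kappa+\kappa^2)$, and solves this $=a$; the quadratic is a perfect square and yields $\kappa=4a^2\sqrt{\hat\kappa}$, which is exactly your peak-balance condition $\kappa^2=16a^4\hat\kappa$. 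So there is no methodological disagreement. But there are concrete gaps that would prevent your computation from landing on the stated formula.

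First, the parametrization. The constants $a,b$ in $\hat\kappa$ are not free: the paper fixes $\lambda_1=(ia-b)/2$, so that the constituent soliton has amplitude $a$ and inverse width $a$, and $\hat\kappa$ is written in \emph{those} variables (e.g.\ $\lambda_1+\lambda_1^*=-b$, $(\lambda_1-\lambda_1^*)^2=-a^2$, $|\lambda_1|^2=(a^2+b^2)/4$). With your choice $\lambda_1=b+ia$ the soliton has amplitude and inverse width $2a$, and the bracket you quote from $\Omega_{22}$ evaluates to a different polynomial --- for instance the term $3\alpha(\lambda_1+\lambda_1^*)(8\gamma|\lambda_1|^2-1)$ contributes $-6b\alpha$ rather than $+3b\alpha$ at linear order, so the $(1+3b\alpha)^2$ structure of $\hat\kappa$ is not reproduced. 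Your $\alpha=\gamma=0$ sanity check cannot detect this because all $b$-dependence disappears there. Second, and relatedly, the theorem's $\Delta(t)$ is the displacement of \emph{each} constituent ($x\to\tilde x\pm\Delta(t)$, with total interchange $2\Delta(t)$, as the paper states after Theorem 3), not the crest-to-crest separation $y_+-y_-$; your identification only matches the stated prefactor $\tfrac1a\log[4a^2\cdots]$ because the factor of $2$ from taking the full separation compensates the factor of $2$ by which your width is mis-scaled. Finally, the decisive step --- substituting $x=v_gt+y$ (with $v_g=2b-(a^2-3b^2)\alpha+4b(a^2-2b^2)\gamma$ in the paper's variables) into the $x^2$, $xt$ and $t^2$ brackets and collecting the coefficient of $t^2$ --- is only asserted (``should yield''); that collection \emph{is} the content of the theorem and must be carried out, after which one should also verify the limiting amplitude equals $a$ so that the tracked point is genuinely the constituent crest.
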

\begin{proof}
First we obtain one soliton solution of Eq.~\eqref{eq1}, by imposing the restrictions $N=1$ and $\lambda= (i a-b)/2 $, $ a, b \in \mathbb{R}$, in Eq.~\eqref{D1}. As a result, we obtain
\begin{subequations}\begin{equation}
	q_1(x,t) = a e^{iz}
	\sech \left[a (x+t (-2 b +( a^2 - 3 b^2) \alpha - 4  b \gamma(a^2 - 
	2 b^2)))\right],\label{a5}
\end{equation}
where 
\begin{equation}
	z= bx +t(a^2 -b^2 + \alpha(3 a^2-b^2)+\gamma(a^4-6a^2b^2+b^4)).\label{b}
\end{equation}
\end{subequations}
The absolute value of one soliton solution \eqref{a5} is
\begin{equation}
|q_1(x,t)|=	\frac{2 a e^{a \left(x+a^2 t \alpha -3 b^2 t \alpha +4 b^3 t \gamma -2 b \left(t+2 a^2 t \gamma \right)\right)}}{1+e^{2 a \left(x+a^2
		t \alpha -3 b^2 t \alpha +4 b^3 t \gamma -2 b \left(t+2 a^2 t \gamma \right)\right)}}.\label{a2}
\end{equation}
\par To track a distinct point on the one soliton, we choose a reference frame by fixing the wave coordinate. From \eqref{a2}, we can determine the maximum point as 
\begin{equation}
	\tilde{x}=t (3 b^2 \alpha-a^2 \alpha  - 4 b^3 \gamma + 
	b (2 + 4 a^2 \gamma)).
\end{equation}
Substituting this value back in Eq.~\eqref{a2}, we can obtain the maximum amplitude of one soliton solution \eqref{a2} as
\begin{equation}
	|q_1( \tilde{x},t)| = a.\label{a1}
\end{equation}
\par The explicit expressions of the second order smooth positon solution is given in Eq.~\eqref{sp2}. To find the asymptotic limit of this second order smooth positon solution, we replace $x \rightarrow \tilde{x}+ \Delta$, where $\Delta$ is a constant in the two positon solution. At $t\rightarrow\pm\infty$, the solution becomes zero since the variable $|t|$ appears as a polynomial in the denominator. To obtain a finite value at $t\rightarrow\pm\infty$, we introduce a logarithmic term with time dependence in the function $\Delta$, that is \cite{kdv2}
\begin{equation}
\Delta(t)= (1/a)\log(\kappa|t|),\label{a6}	
\end{equation} 
 where $\kappa$ is an undetermined constant. We then collect the dominant terms of $|t|$ in both the numerator and denominator in the absolute value of the two positon solution. The resultant action yields
 \begin{subequations}
 \begin{equation}
 	|q_2(\tilde{x}+ \Delta(t),t)|=\dfrac{A_1}{A_2},\label{a}	
 \end{equation}
where
 \begin{eqnarray}
 A_1&=& 8a^3\kappa (1 - 36 b^3 \alpha \gamma + 4 \gamma^2(a^4  + 
		9 b^4 ) + a^2 (\alpha^2 + 4 \gamma) \\&&+ 
		2 b \alpha (3 + 2 a^2 \gamma) + 
		b^2 (9 \alpha^2 - 
		4 \gamma (3 + 
		2 a^2 \gamma)))^{1/2},\notag\\
	A_2&=&64 a^8 \gamma^2 + 
		16 a^4 (1 + 3 b \alpha - 6 b^2 \gamma)^2 + 
		16 a^6 (\alpha^2 + 4 b \alpha \gamma\\&& + 
		4 \gamma (1 - 2 b^2 \gamma)) + \kappa^2.\notag
\end{eqnarray}
\end{subequations}
 \begin{figure}[!ht]
 	\centering
 	\begin{minipage}[b]{0.3\textwidth}
 		\includegraphics[width=6cm,height=6cm]{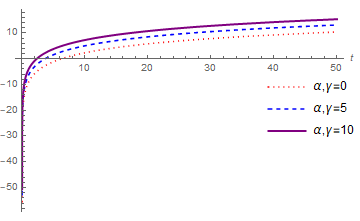}
 		\caption{Time dependent displacement $\Delta(t)$ for the values $a=0.2$ and $ b=0.5$}
 		\label{fig:deltaplot}
 	\end{minipage}
 	\hfill
 	\begin{minipage}[b]{0.5\textwidth}
 		\centering
 		\includegraphics[width=7cm,height=6.5cm]{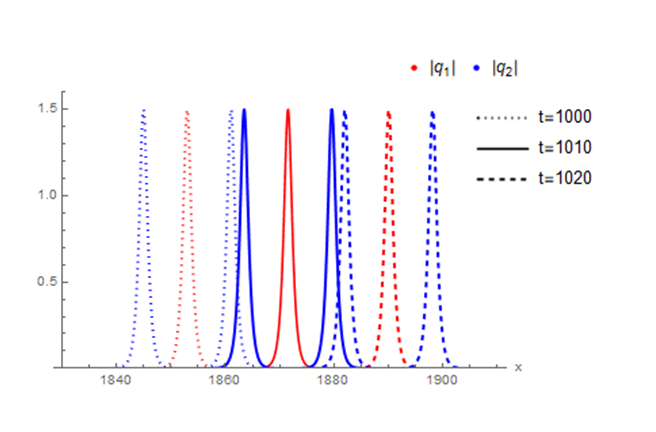}
 		\caption{Second order smooth positon solution $|q_2(x,t)|$ and one soliton solution $|q_1(x,t)|$ with the parameter values $a = 1.5, b = 1.6, \alpha = 0.3$ and $\gamma = 1.5$}
 		\label{fig:2dplot}
 	\end{minipage}
 \end{figure}
To determine the value of $\kappa$, we equate Eq.~\eqref{a} with the maximum of one soliton given in Eq.~\eqref{a1}. Doing so, we find
 \begin{eqnarray}\label{a7}
 \kappa&=& 4 a^2(a^2 \alpha ^2+(1+3 b \alpha )^2+4 \left(a^2-3 b^2+b \left(a^2-9 b^2\right) \alpha \right) \gamma\notag \\&&+4
 \left(a^4-2 a^2 b^2+9 b^4\right) \gamma ^2)^{1/2}.
 \end{eqnarray}
\par The time dependent shift tracks the stable one soliton within the absolute two positon solution. Substituting \eqref{a7} in \eqref{a6} we can obtain the time dependent displacement as given in (\ref{11}). 
\end{proof}
\begin{thm} In the asymptotic limits, the modulus of the shifted two positon solution is equal to the amplitude of one soliton solution.
\end{thm}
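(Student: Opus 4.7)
The plan is to exploit the fact that $\Delta(t)$ in the previous theorem was constructed precisely by equating the dominant-$|t|$ asymptote of $|q_{2}(\tilde{x}+\Delta(t),t)|$ with the one-soliton amplitude $a$; hence the asserted equality should follow by direct substitution of the explicit $\kappa$ from Eq.~\eqref{a7} into the quotient form $A_{1}/A_{2}$ of Eq.~\eqref{a}, together with an algebraic check that the ratio collapses to $a$.

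First I would substitute $x \to \tilde{x} + (1/a)\log(\kappa|t|)$ into the modulus of \eqref{sp2} and expand as $t \to \pm\infty$. The exponential prefactors $e^{\pm a\Delta}$ generate polynomial factors of $\kappa|t|$ which compete with the polynomial prefactors of $t$ already appearing in $\Omega_{22}$. Retaining only the top-order terms in $|t|$ in both numerator and denominator then reproduces exactly the form $A_{1}/A_{2}$ stated in Eq.~\eqref{a}.

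The decisive step is to observe that the polynomial inside the square root defining $A_{1}$ coincides identically with the combination $\hat\kappa$ introduced earlier. A term-by-term comparison shows that
\begin{equation*}
1 + a^{2}\alpha^{2} + 6b\alpha + 9b^{2}\alpha^{2} + 4a^{2}\gamma - 12b^{2}\gamma + 4a^{2}b\alpha\gamma - 36b^{3}\alpha\gamma + 4a^{4}\gamma^{2} - 8a^{2}b^{2}\gamma^{2} + 36b^{4}\gamma^{2} \equiv \hat\kappa,
\end{equation*}
so $A_{1} = 8a^{3}\kappa\sqrt{\hat\kappa}$. The same regrouping applied to $A_{2}$ collapses it to $A_{2} = 16a^{4}\hat\kappa + \kappa^{2}$. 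Inserting $\kappa = 4a^{2}\sqrt{\hat\kappa}$ from Eq.~\eqref{a7} then gives $A_{1} = 32a^{5}\hat\kappa$ and $A_{2} = 32a^{4}\hat\kappa$, whose ratio is exactly $a$, matching $|q_{1}(\tilde{x},t)| = a$ from Eq.~\eqref{a1}.

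The main obstacle I anticipate is not conceptual but combinatorial bookkeeping. In $\Omega_{22}$ the variable $t$ appears simultaneously inside oscillatory exponentials and in explicit polynomial prefactors of degree up to two, so after the substitution $x \to \tilde{x} + \Delta(t)$ one must verify uniformly that the $|t|^{2}$ contributions dominate the $|t|^{1}$ and $|t|^{0}$ terms, and that the oscillatory phases stabilise along the chosen wave coordinate rather than reintroducing hidden cancellations. Once this dominance is carefully established, the algebraic identity displayed above closes the argument with essentially no further work.
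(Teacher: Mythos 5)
Your algebra is right as far as it goes: the polynomial under the square root in $A_1$ is indeed $\hat\kappa$, $A_2$ does collapse to $16a^4\hat\kappa+\kappa^2$, and with $\kappa=4a^2\sqrt{\hat\kappa}$ the ratio $A_1/A_2$ reduces to $a$ (the condition $A_1/A_2=a$ is the quadratic $\kappa^2-8a^2\sqrt{\hat\kappa}\,\kappa+16a^4\hat\kappa=0$, whose double root is exactly $\kappa=4a^2\sqrt{\hat\kappa}$). The difficulty is that this computation is precisely how $\kappa$ was \emph{determined} in the preceding theorem: Eq.~\eqref{a7} is obtained by imposing $A_1/A_2=a$ in Eq.~\eqref{a}, so substituting it back and recovering $a$ is circular and establishes nothing beyond what Theorem~2 already contains. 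Moreover it covers only one of the four asymptotic limits. The statement, as the paper proves it, concerns $q_2(\tilde{x}\pm\Delta(t),t)$ for \emph{both} signs of the shift and both $t\to\pm\infty$; this is what is needed to conclude that the two-positon separates into two identical one-soliton constituents which have exchanged positions with overall displacement $2\Delta(t)$. Your proposal never touches the $-\Delta(t)$ branch.

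The paper's route is genuinely different and carries the substantive content. Since the real and imaginary parts of $q_2$ oscillate through the carrier phase $z(x,t)$ of Eq.~\eqref{b}, the authors first freeze $z(x,t)=z$ to isolate the envelope, write $q_2=q_{2r}+iq_{2i}$ explicitly, substitute $x\to\tilde{x}\pm\Delta(t)$, and extract the dominant powers of $|t|$ separately in $q_{2r}$ and $q_{2i}$ for each of the four limits. This produces the explicit sinusoidal profiles of Eq.~\eqref{asym}, and the modulus $a$ then follows from the identity
\begin{equation*}
a^2(\alpha-4b\gamma)^2+\left(1+3b\alpha+2a^2\gamma-6b^2\gamma\right)^2=\hat\kappa .
\end{equation*}
You correctly flag the internal oscillations as ``the main obstacle,'' but your outline does not resolve it: without the phase-freezing step, the pointwise limit of $|q_2(\tilde{x}+\Delta(t),t)|$ is not obviously well defined, and the dominance of the $|t|^2$ terms must be checked on the envelope, not on the oscillating solution. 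To close the argument you would need to carry out the limit for both shifts $\pm\Delta(t)$ and show that the oscillatory dependence enters only through the fixed combination $z$, which is exactly what the paper's decomposition into $q_{2r}$ and $q_{2i}$ accomplishes.
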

\begin{proof}	
Let us rewrite the two positon solution \eqref{sp2} in the form $q_2 = q_{2r}+ iq_{2i}$. The real ($q_{2r}$) and imaginary parts ($q_{2i}$) of this solution relies on the function $z(x,t)$, given in Eq.~\eqref{b}, which appears in the arguments of the sine and cosine functions. The internal oscillations are produced for different values of $z(x,t)$ which makes it unfeasible to track the constant amplitude \cite{SineG}. To obtain a constant amplitude for the two positon solution with the same overall speed, we consider only the enveloping function by fixing $z(x,t)=z$ (a constant). Implementing this, we find 
\begin{subequations} 
\begin{equation}
	q_{2r}(x,t) =\dfrac{\Omega_{12r}}{\Omega_{22r}},
\end{equation}
	with
	\begin{eqnarray}
\Omega_{12r}&=&-(4 a e^{
			a (x -\tilde{x})} ( (-1 + 2 a b t - 
		a x - 3 a^3 t \alpha + 3 a b^2 t \alpha + 
	12 a^3 b t \gamma - 4 a b^3 t \gamma\notag\\&& + 
e^{2 a (x -\tilde{x})} ( 
3 a^3 t (\alpha - 4 b \gamma) + 
a (x-2 b t  - 3 b^2 t \alpha + 
4 b^3 t \gamma)))\cos[z]\notag\\&& + 
		2 a^2 (1 +e^{2 a (x - \tilde{x})}) t (1 + 
		3 b \alpha + 2 a^2 \gamma - 6 b^2 \gamma) \sin[
		z])),\notag\\
	\Omega_{22r}&= &1 +e^{
			4 a (x -\tilde{x})} + 
		2e^{2 a (x -\tilde{x})} (1 + 
		32 a^8 t^2 \gamma^2  + 
		2 a^6 t^2 (9 \alpha^2 - \gamma(24 b \alpha\notag\\&& + 
		16(1 + 3 b^2 \gamma))) + 
		2 a^2 (x-2 b t  - 3 b^2 t \alpha + 4 b^3 t \gamma)^2+ 
		4 a^4 t (3 x (\alpha - 4 b \gamma)\notag\\&& + 
		t (2 + 6 b \alpha + 9 b^2 \alpha^2 - 
		24 b^3 \alpha \gamma + 24 b^4 \gamma^2))).		
\end{eqnarray}
\end{subequations}
\par A similar expression can also be found for the imaginary part of the two positon solution.  Since the time dependent shift has been determined, we take $x\rightarrow \tilde{x}\pm \Delta(t)$ in the two positon solution and collect the dominant terms of $|t|$. The real part of the two positon solution yields
	\begin{subequations}\label{asym}
		\begin{eqnarray}
		\lim_{t\rightarrow\pm \infty}q_{2r}(\tilde{x}+\Delta(t))&=&\mp\dfrac{a}{ \sqrt{\hat\kappa}}(a (\alpha -4 b \gamma ) \cos[z]\notag\\&&+(1+3 b \alpha +2 a^2 \gamma -6 b^2 \gamma ) \sin[z]),\\
		\lim_{t\rightarrow\pm \infty}q_{2r}(\tilde{x}-\Delta(t))&=&\pm\dfrac{a}{ \sqrt{\hat \kappa}} (a (\alpha -4 b \gamma ) \cos[z]\notag\\&&-(1+3 b \alpha +2 a^2 \gamma -6 b^2 \gamma ) \sin[z]).
	\end{eqnarray} 
For the imaginary part of the two positon solution, we find \begin{eqnarray}
 		\lim_{t\rightarrow\pm \infty}q_{2i}(\tilde{x}+\Delta(t))&=&\pm \dfrac{a}{ \sqrt{\hat \kappa}} ((1+3 b \alpha +2 a^2 \gamma -6 b^2 \gamma ) \cos[z]\notag\\&&-a (\alpha -4 b \gamma ) \sin[z]),
 		\\
 		\lim_{t\rightarrow\pm \infty}q_{2i}(\tilde{x}-\Delta(t))&=&\pm \dfrac{a}{ \sqrt{\hat \kappa}} ((1+3 b \alpha +2 a^2 \gamma -6 b^2 \gamma ) \cos[z]\notag\\&&+a (\alpha -4 b \gamma ) \sin[z]).
 \end{eqnarray} 
	\end{subequations}
The modulus of the shifted two positon solution in all the above four asymptotic limits yield the same value, that is
 \begin{equation}
 	\lim_{t\rightarrow\pm \infty}|q_2^{a,b}(\tilde{x}\pm\Delta(t))|=a.	
 \end{equation}
\end{proof}
\par From the above result, we conclude that the individual one soliton constituents of the two positon solution are  same in both the limits $t\rightarrow\pm \infty$ which can also be confirmed from Fig.~13. Therefore the two one solitons have interchanged their positions with an overall displacement $2 \Delta(t)$. If $\gamma\rightarrow0$ in Eq.~\eqref{11}, we obtain the time dependent displacement of the Hirota equation. The obtained expression exactly coincides with the one reported recently in \cite{Hirota2} for the Hirota equation.  Interestingly, in the limit $\alpha,\gamma\rightarrow0$, the time dependent displacement expression which we found above also matches with the one obtained through inverse scattering method carried out for the NLS equation \cite{nls2}. The time dependent displacement for the smooth positon solution depends on the higher order nonlinear and dispersion parameters $\alpha$ and $\gamma$. While increasing the value of $\alpha$ and $\gamma$, the time dependent shift increases which can be seen in Fig.~12. By tuning certain parameters, the asymptotic limit of Eq.~\eqref{asym} agrees exactly with the result obtained in \cite{Hirota2} for $\gamma\rightarrow 0$. If $\alpha=0$ in Eqs.~\eqref{asym} and \eqref{g}, then the resultant value admits the expression for asymptotic limit of two positon solution and time dependent displacement of the fourth order NLS equation respectively.  It will be interesting to investigate the conserved quantities associated with positons.  It has been shown that the energy of two positon solution will be twice of energy of one positon, that is  $E(q[2])=2E(q[1])$.  In general, for N-positon solutions, $E(q[N])=NE(q[1])$.  We plan to investigate this aspect numerically in near future.

\section{Conclusion}
\par In this work, we have derived higher order degenerate soliton solutions for the ENLSE \eqref{eq1}. Using GDT method, we have constructed smooth positons and b-p solutions of various orders (second, third and fourth).  We have demonstrated that from the constructed solution one can deduce the smooth positon and b-p solutions of NLS, Hiorta and a fourth order NLS equation. We have analyzed the effect of higher order odd and even nonlinear terms on the basic smooth positon and b-p solutions.  Our investigations reveal that the higher order nonlinear terms impose greater compression effect and these two parameters also  tilt the waves. The central region of the b-p solution exhibits a similar structure as that of rogue waves. By introducing certain parameters in the b-p solution, we have explored triangular and circular pattern of waves in the central region of the b-p solutions. In addition to the above, we have studied the behaviour of two positon solution in the asymptotic limits and demonstrated that they exhibit time dependent phase shift during collision. We have also derived the expression for time dependent displacement. Since positon solutions do not exhibit energy exchange during collision they prevent data loss in nonlinear optical fiber systems.  Moreover, multi-positons travel like a single component and therefore they can  model the tidal bore phenomenon in which equal amplitude waves travel simultaneously for several kilometers. Hence, the results which we have presented in this paper will be helpful for both the optics and water wave research community.

\section*{Acknowledgments}

 SM thanks MoE-RUSA 2.0 Physical Sciences, Government of India for providing a fellowship to carry out this work. NVP wishes to thank Department of Science and Technology (DST), India for the financial support under Women Scientist Scheme-A. The work of MS forms part of a research project sponsored by NBHM, Government of India, under the Grant No. 02011/20/2018 NBHM (R.P)/R\&D II/15064. SR acknowledges the financial support from MoE-RUSA 2.0 Physical Sciences, Government of India.

 \bibliographystyle{elsarticle-num}

\begin{thebibliography}{100}
	\bibitem{Y1}X. Y. Gao, Y. J. Guo and W. R. Shan, Appl. Math. Lett. \textbf{120}, 107161 (2021).
	\bibitem{Y2}X. Y. Gao, Y. J. Guo, W. R. Shan, H. M. Yin, X. X. Du and
	D. Y. Yang, Commun. Nonlinear Sci. Numer. Simul. \textbf{105}, 106066 (2022).
	\bibitem{Y3}D. Y. Yang, B. Tian, Q. X. Qu, C. R. Zhanga, S. S. Chena and C. C. Wei, Chaos Solitons Fract. \textbf{150}, 111066 (2021).
	\bibitem{Y4}X. Y. Gao , Y. J. Guoa and W. R. Shanb, Eur. Phys. J. Plus \textbf{136}, 893 (2021).
	\bibitem{soliton} P. G. Drazin and R. S. Johnson, Solitons: an Introduction, (Cambridge University Press, 1988).
	\bibitem{dissipative} Nail Akhmediev and Adrian Ankiewicz,
	\bibitem{M1}
	V. B. Matveev, Phys. Lett. A {\bf 166}, 205 (1992); Phys. Lett. A {\bf 166}, 209 (1992). 

	\bibitem{common}R. Beutler, A. Stahlhofen and V. B. Matveev,  Phys. Scr. \textbf{50}, 9 (1994).

\bibitem{boris}
Yuri S. Kivshar, Boris A. Malomed, Rev. Mod. Phys. \textbf{61}, 763 (1989).
\bibitem{neg} C. Rasinariu, U. Sukhatme and Avinash Khare, J. Phys. A: Math. Gen. \textbf{29}, 1803 (1996).
	\bibitem{Matveev4}
P. Dubard, P. Gaillard, C. Kleina, and V.B. Matveev, Eur. Phys. J. Special Topics {\bf 185}, 247-258 (2010).
	\bibitem{Matveev3}
V.B. Matveev, Theor. Math. Phys. {\bf 131},  483 (2002).
\bibitem{kdv}	
J. Cen, F. Correa and A. Fring, J. Math. Phys. \textbf{58}, 032901 (2017).
\bibitem{SineG} J. Cen, F. Correa and A. Fring, J. Phys. A \textbf{50}, 435201 (2017).

\bibitem{jing}
L. Wang, J. S. He, H. Xu, J. Wang and K. Porsezian, Phys. Rev. E {\bf 95}, 042217 (2017).

\bibitem{bk}
H. C. Hu, Phys. Lett. A {\bf 373}, 1750 (2009).

\bibitem{ckdv}
H. C. Hu, B. Tong and S. Y. Lou, Phys. Lett. A {\bf 351}, 403 (2006).
	
	
\bibitem{Hirota2}
J. Cen and A. Fring, Physica D {\bf 397}, 17 (2019).
\bibitem{kdv2} F. Correa and A. Fring, J. High Energy
Phys. \textbf{2016}, 8 (2016).
\bibitem{tidal} H. Chanson, ``\textit{Tidal Bores, Aegir, Eagre, Mascaret, Pororoca: Theory and Observations}" (World Scientific, Singapore, 2012).
\bibitem{SineG1}R. Beutler, J. Math. Phys. \textbf{34}, 3098 (1993).
\bibitem{p1} H. Maisch, A. A. Stahlhofen, Phys. Scr.\textbf{ 52}, 228 (1995).

\bibitem{sasa} H. C. Hu and Y. Liu, Phys.
Lett. A. \textbf{372}, 5795 (1995) .

\bibitem{Liu1} W. Liu, Y. S. Zhang and J. S. He, Waves in Random
Complex Media \textbf{28}, 203 (2018).

\bibitem{Liu2} S. Z. Liu, Y. S. Zhang and J. S. He,  Commun.
Theor. Phys.\textbf{ 71}, 357 (2018).

\bibitem{song} W. J. Song, S. W. Xu, M. H. Li and J. S. He,  Nonlinear Dyn. \textbf{97}, 2135 (2019).

\bibitem{p2}D. Q. Qiu and W. G. Cheng, Commun. Nonlinear Sci. Numer. Simul.\textbf{ 78}, 104887 (2019).	
	
	\bibitem{mKdv}
H. Hu, L. Liu and L. Zhang, Appl. Math. Comput. {\bf 219}, 5743 (2013).
\bibitem{kundu}
X. Shi, J. Li and C. Wu, Chaos {\bf 29}, 023120 (2019).

\bibitem{cmkdv}
Z. Zhang, X. Yang and B. Li, Appl. Math. Lett. {\bf 103}, 106168 (2020).

\bibitem{chin}
A. Hu, M. Li and J. He, Nonlinear Dyn. {\bf 104}, 4329 (2021).
\bibitem{w1} N. Vishnu Priya, S. Monisha, M. Senthilvelan and Govindan Rangarajan, Eur. Phys. J. Plus \textbf{137}, 1-13 (2022).	
\bibitem{A1}
D. J. Kedziora, A. Ankiewicz and N. Akhmediev, Phys. Rev. E {\bf 85}, 066601 (2012).	
	
 \bibitem{A2}
A. Chowdury and W. Krolikowski, Phys. Rev. E {\bf 96}, 042209 (2017).
	
 \bibitem{cmkdv2}
Z. Zhang, X. Yang and B. Li, Nonlinear Dyn. {\bf 100}, 1551 (2020).

\bibitem{Kundu2}
D. Qiu and W. Cheng, Appl. Math. Lett. {\bf 98}, 13 (2019).

\bibitem{sasa2}
L. Guo, Y. Cheng, D. Mihalache and J. He,  Rom. J. Phys. \textbf{64}, 104 (2019).

\bibitem{mb}
F. Yuan, Nonlinear Dyn. {\bf 102}, 1761 (2020).
\bibitem{b1} M. Wang, B. Tian
,C. C. Hu and S. H. Liu,  Appl.
Math. Lett.\textbf{119}, 106936 (2021).
\bibitem{b2}Y. Shen and B. Tian, Appl. Math. Lett. \textbf{122}, 107301 (2021).
\bibitem{b3}X.Y. Gao, Y.J. Guo and W.R. Shan, Chaos
Solitons Fract. \textbf{150}, 110487 (2021).
\bibitem{b4}X.T. Gao, B. Tian, Y. Shen and C. H. Feng, Chaos Solitons Fract. \textbf{151}, 111222 (2021).
\bibitem{fib}	
M. J. Potasek and M. Tabor, Phys. Lett. A \textbf{154}, 449 (1991).

\bibitem{fib2}S. B. Cavalcanti, J. C. Cressoni, H. R. da Cruz and A. S. Gouveia-Neto, Phys. Rev. A \textbf{43}, 6162 (1991).

\bibitem{ocean1} Yu. V. Sedletsky, J. Exp. Theor. Phys. \textbf{97}, 180 (2003).

\bibitem{ocean2}
A. V. Slunyaev, J. Exp. Theor. Phys. \textbf{101}, 926 (2005).
\bibitem{int}	
A. Ankiewicz and N. Akhmediev, Phys. Lett. A \textbf{378}, 358 (2014).

\bibitem{bim} E. Yomba and G.A. Zakeri, Physica B \textbf{483}, 26 (2016).
\bibitem{hrw}	
 C. Q. Su, N. Qin and J. G. Li, Superlattices Microstruct. \textbf{100}, 381 (2016).
 \bibitem{yulanma} Yu-Lan Ma, Nonlinear Dyn. \textbf{97}, 95-105 (2019).
 \bibitem{Anki}	
 A. Ankiewicz, Y. Wang, S. Wabnitz and N. Akhmediev, Phys. Rev. E \textbf{89}, 012907 (2014).
\bibitem{nls} N. Akhmediev and A. Ankiewicz, Solitons (Chapman and Hall, London, 1997).	

\bibitem{P1}	
G. P. Agrawal, Nonlinear Fiber Optics (Academic Press, San Diego, 2007).
\bibitem{P2}
P. Porsezian and K. Nakkeeran, Phys. Rev. Lett. \textbf{74}, 2941 (1995).
\bibitem{P3}
A. Mahalingam and T. Alagesan, Chaos, Solitons and Fractals \textbf{25},  319 (2005).	
\bibitem{hirota} R. Hirota, J. Math. Phys. \textbf{14}, 805 (1973).

\bibitem{Anki2} A. Ankiewicz, J. M. Soto-Crespo and N. Akhmediev, Phys. Rev. E \textbf{81}, 046602 (2010).
\bibitem{lpd} M. Lakshmanan, K. Porsezian and M. Daniel, Phys. Lett. A \textbf{133}, 483 (1988).
\bibitem{bq1}W. Y. Guan and B. Q. Li, Opt. Quant. Electron. \textbf{51}, 352 (2019).
\bibitem{bq2}B. Q. Li and W. Y. Guan, Opt. Quant. Electron. \textbf{53}, 216 (2021).
\bibitem{bq3}B. Q. Li, Optik \textbf{217}, 164670 (2020).
\bibitem{bq4}B. Q. Li and Y. L. Ma, Appl. Math.
Comput. \textbf{386}, 125469 (2020).
\bibitem{bq5}Yu-Lan Ma, Optik \textbf{251}, 168103 (2022).
\bibitem{bq6}B. Q. Li and Y. L. Ma, Comput. Math. Appl.
\textbf{77}, 514 (2019).
\bibitem{bq8}B. Q. Li and Y. L. Ma, Chaos Soliton Fract.
\textbf{156}, 111832 (2022).
\bibitem{Matveevbook}
V.B. Matveev, M.A. Salle, {\it Darboux Transformations and Solitons} (Springer, Berlin, 1991).
		
\bibitem{hbs}
D. Su, X. Yong, Y. Tian and J. Tian, Mod. Phys. Lett.
B \textbf{32}, 1850309 (2018). 	
\bibitem{nls2} E. Olmedilla, Physica D \textbf{25}, 330 (1987).

\end{thebibliography}

\end{document}